\newcommand{\thmlabel}[1]     {\label{thm:#1}}
\newcommand{\thmref}[1]     {Theorem~\ref{thm:#1}}
\renewcommand{\eqref}[1]      {(\ref{eq:#1})}
\newcommand{\eqlabel}[1]     {\label{eq:#1}}
\newcommand{\parhead}[1]{\noindent{\textbf{#1.}\xspace}}
\newcommand{\myboldmath}{}
\newcommand{\defn}[1]           {{\textit{\textbf{\myboldmath #1}}}}
\newcommand{\mmb}{Multiple-Message Broadcast\xspace}
\newcommand{\RNs}{Radio Networks\xspace}
\newcommand{\RN}{{Radio Network}\xspace}
\newcommand{\footnotenonumber}[1]{{\def\thempfn{}\footnotetext{#1}}}
\begin{document}

\title{Dynamic \mmb:\\ Bounding Throughput in the Affectance Model}

\titlerunning{Dynamic \mmb}

\author{
Dariusz~R.~Kowalski\inst{1}
\and
Miguel~A.~Mosteiro\inst{2}
\and
Kevin~Zaki\inst{2}
}

\authorrunning{D.~R.~Kowalski, M.~A.~Mosteiro, and K.~Zaki}

\institute{        
       Univ. of Liverpool, 
       Dept. of Computer Science,
       Liverpool, UK.\\
       \email{D.Kowalski@liv.ac.uk}
\and
       Kean University,
       Dept. of Computer Science,
       Union, NJ.\\
       \email{\{mmosteir,zakike\}@kean.edu}
}

\maketitle

\begin{abstract}
We study a dynamic version of the \mmb problem, where packets are continuously injected in 
network nodes for dissemination throughout the network. 
Our performance metric is the ratio of the throughput of such protocol against the optimal one, 
for any sufficiently long period of time since startup. We present and analyze a dynamic \mmb protocol that works under an affectance model, which parameterizes the interference that other nodes introduce in the communication between a given pair of nodes. 
As an algorithmic tool, we develop an efficient algorithm to schedule a broadcast along a BFS tree under the affectance model. 
To provide a rigorous and accurate analysis, we define two novel network
characteristics based on the network topology and the affectance function.
The combination of these characteristics
influence the performance of broadcasting with affectance (modulo a logarithmic function).
We also carry out simulations of our protocol under affectance. To the best of our knowledge, this is the first dynamic \mmb protocol that provides throughput guarantees for continuous injection of messages and works under the affectance model. 
\end{abstract}

\footnotenonumber{A preliminary version of this work has appeared in~\cite{KMR_fomc14}. The differences with respect to that version are detailed in Section~\ref{sec:diff} in the Appendix.}


\section{Introduction}

We study the dynamic \mmb problem in wireless networks under the {\em affectance} model. 
This model subsumes many communication-interference models studied in the literature, such as 
\RN (cf.,~\cite{ChlamtacK87}) and models based on the Signal to Interference and Noise Ratio (SINR)
(cf.~\cite{HWaff,SRSSINRdomSet}).
The notion of affectance was first introduced in~\cite{HWaff} in the context of link scheduling
in the more restricted SINR model of wireless networks, in an attempt 
to formalize the combination of interferences from
a subset of links to a selected link under the SINR model.
Later on, other realizations of affectance were defined and abstracted as an independent model
of interference in wireless networks~\cite{Kaff, KVaff}. 
The conceptual idea of this model is to parameterize the interference that transmitting nodes 
introduce in the communication between a given pair of nodes. 

\parhead{Our results}
In the {\em dynamic \mmb} problem considered in this work, packets arrive at nodes in an online fashion and
need to be delivered to all nodes in the network. 
We are interested in the \emph{throughput}, i.e., the number of packets delivered in a given period of time.
In particular, we measure {\em competitive throughput} of 
{\em deterministic distributed algorithms} for the dynamic \mmb problem.  
We analyse our algorithms in the (general) affectance model, in which there is a given undirected 
communication graph $G$ of $n$ nodes and diameter $D$, together with the affectance function $a(\cdot)$
of nodes of distance at least $2$ on each of the communication links.
The affectance function has a {\em degradation parameter} $\alpha$, being a distance after which
the affectance is negligible.
Our contribution is two fold.

First, we introduce new model characteristics --- based on the underlying communication network
and the affectance function ---
called {\em maximum average tree-layer affectance} (denoted by $K$)
and {\em maximum path affectance} (denoted by $M$), see Section~\ref{s:prelim} for the definitions,
and show how they influence the time complexity of broadcast. More precisely,
if one uses a BFS tree 
that minimizes the product $M\cdot (K+M/\log n)$\footnote{%
Throughout, we denote $\log_2$ simply as $\log$, unless otherwise stated.%
} of the two above characteristics, 
then a single broadcast can be done in time $D+O(M(K+M/\log n)\log^2 n)$,
cf., Corollary~\ref{c:broadcast} in Section~\ref{sec:BT}. 

Second, we extend this method of analysis to a dynamic packet arrival model and
the \mmb problem, and design a new algorithm reaching competitive throughput of
$\Omega(1/(\alpha K \log n))$. In particular, in the \RN model it implies a competitive throughput of
$\Omega(1/(\log^2 n))$. For details, see Section~\ref{section:prot}.
Our deterministic results are existential, that is, 
we show the existence of a deterministic schedule 
by applying a probabilistic argument to 
a protocol that includes a randomized subroutine for layer to layer dissemination. 
Given that we measure competitive throughput in the limit, preprocessing (communication infrastructure setup, topology information dissemination, etc.) can be carried out initially without asymptotic impact.
Thus, the protocol presented is distributed, and it works for \emph{every} network after learning its topology.
The protocol can also be applied to mobile networks, if the movement is slow enough to recompute the structure.
Our rigorous asymptotic analysis is further complemented by simulations
under the affectance model, c.f., Section~\ref{s:simul}.

To the best of our knowledge, ours is the first work on the dynamic \mmb problem in wireless
networks under the general affectance model.

\parhead{Previous and related work}
There is a rich history of research on broadcasting dynamically arriving packets on 
a {\em single-hop radio network}, also called a {\em multiple access channel}.
Most of the research focused on {\em stochastic arrivals}, cf., a survey by Chlebus~\cite{Chlebus-01}.
In the remainder of this paragraph, we focus on the on-line adversarial packet arrival setting.
Bender et al. \cite{BenderFHKL05} studied stability, understood as throughput being
not smaller than the packet arrival rate, of randomized backoff protocols on multiple access channels in 
the {\em queue-free model}, in which every packet is handled independently as if it has been a standalone station
(thus avoiding queuing problems).
Kowalski~\cite{Kowalski05} considered a dynamic broadcast on the channel in the setting where packets 
could be combined in a single message, which again avoids various 
important issues related with queuing.
Anantharamu et al. \cite{AnantharamuCKR-INFOCOM10} studied packet latency of deterministic 
dynamic broadcast protocols for arrival rates {\em smaller} than~$1$.
%
Stability, understood as bounded queues, of dynamic deterministic broadcast on multiple access channels against 
adversaries bounded by arrival rate~$1$ was studied by Chlebus et al. \cite{ChlebusKR09}, and for arrival rates smaller than $1$ by Chlebus et al. \cite{ChlebusKR12}.
In particular, in \cite{ChlebusKR09} a protocol Move-big-to-front (MBTF) was designed,
achieving stability but not fairness (as both these properties are impossible to achieve simultaneously);
we use this algorithm as a subroutine in our dynamic \mmb protocol.
A follow-up work~\cite{BienkowskiJKK12} delivered a distributed online algorithm {\sc Scat} and showed that
it could be only by a linear factor worse, in terms of the buffer size, than any offline solution against
any arrival pattern.

In {\em multi-hop} \RNs, the previous research concentrated on time complexity of single instances (i.e., from a single source) of broadcast and multi-message broadcast.
For directed networks, the best deterministic solution is a combination of the $O(n\log n \log\log n)$-time algorithm by
De Marco~\cite{DeMarco08} and the $O(n\log^2 D)$-time algorithm by Czumaj and Rytter~\cite{CzumajR03}. In undirected networks, the best up to date deterministic broadcast in $O(n\log(n/D))$ rounds was given by Kowalski~\cite{Kowalski05}.
The lower bounds for deterministic broadcast in directed and undirected radio networks are
$\Omega(n\log(n/D))$~\cite{ClementiMS01} and $\Omega(n\log_D n)$~\cite{KowalskiP05}, respectively.
Deterministic multi-message broadcast, group communication and gossip were also considered
(again, in a single instance). 
Chlebus et al.~\cite{ChlebusKPR11} showed a $O(k\log^3 n+n\log^4 n)$ time
deterministic multi-broadcast algorithm for $k$ packets in undirected radio networks.
Single broadcast can be done optimally in $\Theta(D\log(n/D)+\log^2 n)$, as proved
in~\cite{AlonBLP91,KushilevitzM98} (lower bounds) and in~\cite{CzumajR03,KowalskiP05} (matching upper bound). 
Bar-Yehuda et al.~\cite{BII93}, and recently 
Khabbazian and Kowalski~\cite{KhabbazianK11} and 
Ghaffari et al.~\cite{GHK:randBroad}, 
studied randomized multi-broadcast protocols; the best results obtained for $k$-sources single-instance multi-broadcast is the amortized $O(\log \Delta)$ rounds per packet w.h.p. in~\cite{KhabbazianK11}, where $\Delta$ is the maximum node degree, and $O(D+k\log n+\log^2 n)$ w.h.p. to broadcast the $k$ packets, for settings with known topology in~\cite{GHK:randBroad}.
For the same problem, Ghaffari et al. showed a throughput upper bound of $O(1/\log n)$ for any algorithm in~\cite{GHK:throughputLB}. Although this bound is worst-case, it can be compared with our $1/O(\alpha K \log n)$ that applies even under affectance.

Chlebus et al.~\cite{ChlebusKR09} gave various deterministic and randomized algorithms
for group communication, all of them being only a small polylogarithm away of the corresponding
lower bounds on time complexity.

In the {\em SINR model}, single-hop instances of 
broadcast in the ad-hoc setting were studied 
by Jurdzinski et al.~\cite{JurdzinskiKRS13,JurdzinskiKS13} and Daum et al.~\cite{DaumGKN13}, who gave several deterministic and
randomized algorithms working in time proportional to the diameter multiplied by a polylogarithmic
factor of some model parameters. In the SINR model with restricted sensitivity, so called
weak-sensitivity device model, Jurdzinski and Kowalski~\cite{JK-DISC-12} designed an algorithm
spanning an efficient backbone sub-network, that might be used for efficient implementation
of multi-broadcast.

The {\em generalized affectance} model was introduced and used only in the context of one-hop
communication, more specifically, to link scheduling by Kesselheim~\cite{Kaff}. 
He also showed how to use it for dynamic link scheduling in batches.
This model was inspired by the affectance parameter introduced in the more restricted SINR setting~\cite{HWaff}.
They give a characteristic of a set of links, based on affectance, that influence
the time of successful scheduling these links under the SINR model.
In our paper, we generalize this characteristic, called the maximum average tree-layer affectance,
to be applicable to multi-hop communication tasks
such as broadcast, together with another characteristic, called the maximum path affectance.
For details see Section~\ref{s:prelim}.


\section{Preliminaries}
\label{s:prelim}

\paragraph{Model.}
We study a model of network consisting of $n$ \defn{nodes}, where communication is carried out through radio \defn{transmissions} in a shared channel. 
Time is discretized in a sequence of time slots $1,2,\dots$, which we call the \defn{global time}. 
The network is modeled by the underlying \defn{connectivity graph} $G=\{V,E\}$, where $V$ is the set of nodes and $E$ the set of \defn{links} among nodes. A link $\ell\in E$ between two nodes $u,v\in V$ is the ordered pair $(u,v)$ modeling that a transmission from $u$ may be received by $v$. The network is assumed to be connected but \defn{multihop}. That is, any pair of nodes may communicate, possibly through multiple hops. 

Messages to be broadcast to the network through radio transmissions are called \defn{packets}.
Packets are \defn{injected} at nodes at the beginning of time slots, and each time slot is long enough to transmit a packet to a neighboring node. 
Any given node can either transmit or listen (in order to receive, if possible) in a time slot, but not both.

Interference on a link due to transmissions from other nodes is modeled as \defn{affectance}. 
We use a model of affectance that subsumes other interference models~\footnote{In preliminary work~\cite{KMR_fomc14}, we studied a different model of affectance. The details are included in Section~\ref{sec:diff} in the Appendix.},
such as the \RN model~\cite{ChlamtacK87} and the SINR model~\cite{HWaff}.
Specifically, we realize affectance as a matrix $A$ of size $|V|\times|E|$ where $A(u,(v,w))$ quantifies the interference that a transmitting node $u\in V$ introduces to the communication through link $(v,w)\in E$. 
We do not restrict ourselves to any particular affectance function, as long as its effect is additive. 
That is, denoting $a_{V'}((v,w))$ as the affectance of a set of nodes $V'\subseteq V$ on a link $(v,w)\in E$, 
and $a_{V'}(E')$ as the affectance of a set of nodes $V'\subseteq V$ on a set of links $E'\subseteq E$, it is
\begin{align*}
a_{V'}((v,w)) &= \sum_{\substack{u\in V'\\u\neq v}}A(u,(v,w))\\
a_{V'}(E') &= \sum_{(v,w)\in E'}a_{V'}((v,w)).
\end{align*}

Under the affectance model, we define a \defn{successful transmission} as follows.
For any pair of nodes $u,v \in V$ such that $(u,v)\in E$, a transmission from $u$ is received at $v$ in a time slot $t$ if and only if: $u$ transmits and $v$ listens in time slot $t$, 
and $a_{\mathcal{T} (t)}((u,v))< 1$, where $\mathcal{T} (t)\subseteq V$ is the set of nodes transmitting in time slot $t$ (notice that the definition of $a$ does not include the affectance of $u$ on $(u,v)$). 
The event of a non-successful transmission, that is when the affectance is at least $1$, is called a \defn{collision}. 
We assume that a node listening to the channel cannot distinguish between a collision and background noise present in the channel in absence of transmissions. 

The affectance model defined subsumes other interference models. For instance, for the \RN model, the affectance matrix is 
\begin{align*}
A(w,(u,v)) &=
\left\{ \begin{array}{ll}
 0 & \textrm{if $w=u$ or ($(w,v)\notin E$ and $w\neq v$)},\\
 1 & \textrm{otherwise}.
  \end{array} \right.
\end{align*}
On the other hand, for the SINR model in~\cite{HWaff}, the affectance matrix is 
\begin{align*}
A(w,(u,v)) &= 
\left\{ \begin{array}{ll}
 0 & \textrm{if $w=u$},\\
 \frac{P/d_{wv}^\alpha}{P/(\beta' d_{uv}^\alpha)-N} & \textrm{otherwise}.
  \end{array} \right.
\end{align*}
Where $P$ is the transmission power level, $N$ is the background noise, $\beta'$ denotes an upper bound on the signal to interference-plus-noise ratio such that a message cannot be successfully received, $d_{uv}$ is the euclidean distance between nodes $u$ and $v$, and $\alpha$ denotes the path-loss exponent. (Refer to Section~\ref{s:mapping} in the Appendix for a proof.)

\paragraph{Communication task.}
Under the above model, we study the \defn{\mmb} problem defined as follows.
Starting at time slot $1$, packets are dynamically injected by an exogenous entity into some of the network nodes, called \defn{source nodes}. The computing task is to disseminate those injected packets throughout the network.
The set of all source nodes is denoted as $S\subseteq V$.
After a packet has been received by all the nodes in the network, we say that the packet was \defn{delivered}.
The injections are adversarial, that is, packets can be injected at any time slot at any source node, but the injections are limited to be feasible. 
We say that an injection is \defn{feasible} if there exists an optimal algorithm OPT such that the \defn{latency} (i.e., the time elapsed from injection to delivery) of each packet is bounded for OPT. 
Given that at most one packet may be received by a node in each time slot, and that all nodes must receive the packet to be delivered, feasibility limits the adversarial injection rate to at most $1$ packet per time slot injected in the network.
The goal is to find a \defn{broadcasting schedule}, that is, a temporal sequence of transmit/not-transmit states for each node, so that packets are delivered. We denote the period of time since a packet is transmitted from the source until its delivery as the \defn{length} of the schedule.

\paragraph{Performance metric.}
We evaluate the \defn{ratio} of the performance of a distributed online algorithm ALG against an optimal algorithm OPT. 
For one hop networks it is known~\cite{ChlebusKR09} that no protocol is both \defn{stable} 
(i.e., bounded number of packets in the system at any time) and \defn{fair} (i.e., every packet is eventually delivered). For multihop networks the same result holds as a natural extension of the single hop model. 
Thus, instead of further limiting the adversary (beyond feasibility) to achieve either stability or bounded latency, our goal is to prove a lower bound on the \defn{competitive throughput}, for any sufficiently long prefix of time slots since global time $1$. 
Specifically, we want to prove that there exists a function $f$, possibly depending on network parameters, such that 
$$\lim_{t\to\infty} \frac{d_{ALG}(t)}{d_{OPT}(t)} \in \Omega(f),$$ 
where $d_X(t)$ is the number of packets delivered to all nodes by algorithm $X$ until time slot $t$.

\paragraph{Network characterization.}

We characterize a network by its \defn{affectance degradation distance}, which is the number of hops $\alpha$ 
such that the affectance of nodes of distance at least $\alpha$ to a given link is ``negligible'', that is, zero. 
Additionally, we characterize the network with two measures of affectance based on broadcast trees~\footnote{The second characterization was presented differently in the conference version of this work. The details are included in Section~\ref{sec:diff} in the Appendix.}, as follows. 
Given a network with a set of nodes $V$ including a source node $s$, 
consider a Breadth First Search (BFS) tree $T$ rooted at $s$. 
For any $d=0,1,2,\dots$, let $V_d(T)$ be the set of all nodes at (shortest) distance $d$ from $s$.
Based on this tree, we define the \defn{maximum average tree-layer affectance} as
$$K(T,s)=\max_d \max_{V'\subseteq V_d(T)} \frac{1}{|L(V')|} a_{V'} (L(V')) \ ,$$
where $L(V')$ is the set of tree links between $V'$ and nodes at distance $d+1$ of the source.
Intuitively, $K(T,s)$ indicates what might be the worst affectance to overcome when
trying to broadcast from one layer of $T$ to another.
We also define the \defn{maximum path affectance} as
$$M(T,s)=\max_{p\in P(T)}\sum_{(u,v)\in p} a_{V_{d(u)}(T)}((u,v)) \ ,$$
where 
$d(u)$ is the distance from node $u$ to $s$,
and $P(T)$ is the set of paths root-to-leaf in $T$ (i.e., a set of sets of links), where a \defn{path root-to-leaf} is the standard notion of a set of links $\{(s,x_1),(x_1,x_2),(x_2,x_3),\dots,(x_{k-1},x_k)\}$ such that $x_k$ is a leaf. 
Intuitively, $M(T,s)$ indicates what is the worst affectance when trying to pipeline packets through a path down the tree.
In the rest of the paper, the specific tree and source node will be omitted when clear from context.


\section{A Broadcast Tree}
\label{sec:BT}
 
In this section, we show a broadcasting schedule that, under the affectance model, disseminates a packet held at a source node to all other nodes. The schedule is defined constructively with a protocol that uses randomization, thus providing only stochastic guarantees. Given that the protocol is Las Vegas, the construction also proves the existence of a deterministic broadcasting schedule. 

First, we detail the construction of a ranked tree spanning the network rooted at the source node
that will be used to define the broadcasting schedule that we detail afterwards. 
The construction borrows the idea in~\cite{GPQ:gossipTree} of defining some nodes as \emph{fast} and others as \emph{slow} based on rank. 
However, our rank is a consequence of affectance rather than \RN collisions, and it is defined to schedule transmissions downwards the tree only, rather than both directions.
Moreover, our definition of the fast node sets, the slot reservation, and the contention resolution protocol are also different.
The following notation will be used.

Given a tree $T(s)\subseteq E$ rooted at $s\in V$, spanning a set of network nodes $V$ with set of links $E$, 
let $d(v)$ be the \defn{distance} in hops from a node $v\in V$ to the root of $T(s)$,
let $p(\ell)$ and $c(\ell)$ be the parent and child nodes of link $\ell\in T(s)$ respectively,
and let $D(T(s))$ be the 
maximum distance in $T(s)$ from any node to the root $s$.
Additionally, a \defn{rank} (a number in $\mathbb{N}$) will be assigned to each node.
Let $r(u)$ be the rank of node $u\in V$,
let $R(T(s))$ be the maximum rank in the tree,
and let $F_d^r (T(s))= \{u | u\in V_d \land r(u)=r \land \exists v\in V_{d+1} : ((u,v)\in T(s) \land r(v)=r)\}$, 
that is, the set of nodes of rank $r$ at distance $d$ from the root that have a child with the same rank.
Let a node $v\in V$ be called \defn{fast} if it belongs to the set $F_{d(v)}^{r(v)}(T(s))$, and \defn{slow} otherwise.
The sets $F_d^r$ are called \defn{fast node sets} whereas the set containing all slow nodes is called \defn{slow node set}.
In the above notation, the specific tree parameter and/or source node will be omitted when clear from the context. 

Given a graph $G$ and a source node $s\in S$, consider the following construction of a 
\defn{Low-Affectance Broadcast Spanning Tree (LABST)}.
Let $T_{min}$ be the BFS tree that minimizes the following polynomial on the affectance measures. Letting $\mathcal{T}$ be the class of all BFS trees with source $s$, it is 
$$\forall T\in \mathcal{T} : M(T_{\min},s) \left(  \frac{M(T_{\min},s)}{\log n} + K(T_{\min},s) \right) \leq M(T,s) \left(  \frac{M(T,s)}{\log n} + K(T,s) \right).$$
Then, using Algorithm~\ref{alg:prunaff}, we define a rank on each node, that is, transform $T_{\min}$ into a LABST $T$, to avoid links between nodes of the same rank with big affectance. 

In brief, the transformation is the following (refer to Algorithm~\ref{alg:prunaff}).
Initially, the rank of all nodes is set to~$1$, and the fast node sets are initialized in Lines~\ref{initsetbegins} to ~\ref{initsetends}. Then, for each distance $d$ upwards the tree, two phases are executed as follows.

In a first phase (Lines~\ref{updaterankdbegins} to~\ref{updaterankdends}), the rank of all nodes at distance $d$ is updated if necessary. That is, for each increasing rank $r$, and for each link $\ell$ such that the parent node $u$ is located at distance $d$ (hence, child node at distance $d+1$) and parent and child nodes have rank $r$, check the affectance on link $\ell$ from other rank-$r$-distance-$d$ nodes with a rank-$r$ child. If this affectance is at least $1$, increase the rank of $u$, and remove $u$ from the fast node set $F_d^r$ since its rank is not $r$ anymore. Notice that $u$ had the maximum rank among its children because it was in a fast node set, but now has a rank bigger than any of its children. Hence, $u$ is now in the slow node set.

In the second phase (Lines~\ref{updaterankupwardsbegins} to~\ref{updaterankupwardsends}),
the rank of all the ancestors (distance $<d$) is updated so that the rank of a node at distance $d$ equals the maximum rank (not necessarily unique) among its children (that is, ranks are monotonically non-decreasing upwards). 
While computing the rank, all fast node sets $F_d^r$ are also updated.

\begin{algorithm}[htbp]
\dontprintsemicolon
\label{alg:prunaff}
\caption{LABST construction.}
$T\leftarrow T_{\min}$\;
\ForEach{distance $d=0,1,2,\dots,D(T)$\nllabel{initsetbegins}}{ 
	\ForEach(\tcp*[f]{$R(T)\leq D(T)+1$}){rank $r=1,2,\dots,D(T)+1$}{
		$F_{d}^r\leftarrow\emptyset$\;
	}
}
\ForEach{$u\in V$}{ 
	$r(u)\leftarrow 1$\tcp*[f]{initially, all nodes have rank $1$}\nllabel{line:initrank}\;
	\If{$u$ is not a leaf in $T$}{$F_{d(u)}^1 \leftarrow F_{d(u)}^1 \cup \{u\}$\nllabel{initsetends}\;}
}
\ForEach{distance $d=D(T)-1,\dots,2,1,0$\nllabel{line:upwards}}{ 
\ForEach{rank $r=1,2,\dots,D(T)+1$\nllabel{updaterankdbegins}}{ 
\ForEach{node $u$ such that $u\in F_d^r$}{ 
\ForEach{link $\ell$ such that $p(\ell)=u$}{ 
\If{$u\in F_d^r$ {\bf and} $a_{F_d^r}(\ell)\geq1$\nllabel{line:reduce}}{
	$r(u)\leftarrow r(c(\ell))+1$\nllabel{line:increaserank}\;
	$F_d^r\leftarrow F_d^r\setminus\{u\}$\tcp*[f]{$u$ is now slow}\nllabel{line:toslow}\nllabel{updaterankdends}\;
}
}
}
}
\ForEach{distance $d'=d-1,\dots,2,1,0$\nllabel{updaterankupwardsbegins}}{
	\ForEach(\tcp*[f]{update ranks and sets}){node $u\in V_{d'}$}{
		$F_{d'}^{r(u)}\leftarrow F_{d'}^{r(u)}\setminus\{u\}$\;
		$r_{\max} \leftarrow \max_{\substack{w\in V_{d'+1}\\(u,w)\in T}} r(w)$\tcp*[f]{max rank of children of $u$}\;
		\If{$r(u)\leq r_{\max}$}{
			$r(u)\leftarrow r_{\max}$\;
			$F_{d'}^{r(u)}\leftarrow F_{d'}^{r(u)}\cup\{u\}$\nllabel{updaterankupwardsends}\;
		}
	}
}
}
\end{algorithm}

The broadcasting schedule is defined using the LABST $T$ obtained. Being a radio-broadcast network, transmissions might be received using other links or time slots, but the LABST and broadcasting schedule defined provide the communication guarantees.
Each node follows certain broadcasting schedule, but using only time slots reserved for itself.
Then, for each node $v\in V$, 
if $v$ is fast, it uses each time slot $t$ such that $t \equiv d(v)+2h(R(T) - r(v)) \pmod{2hR(T)}$, 
where $h=\max\{3,\alpha\}$ and $\alpha$ is the affectance degradation distance.
The purpose of lower bound $h$ to $3$ is to isolate affectance among neighboring layers, as in the \RN model.
Otherwise, if $v$ is slow, it uses each time slot $t$ such that $t \equiv d(v)+h \pmod{2h}$.
Notice that this schedule separates transmissions that occur in the same time step as follows. For any pair of slow nodes, or pair of fast nodes, they are either at the same distance from the source or they are separated by at least $2h$ hops from each other. For any pair of one slow and one fast node, they are separated by at least $h$ hops. The reason to lower bound $h$ by $3$ is to avoid unnecessary interference between links separated by one hop.

The \defn{broadcasting schedule} for fast nodes is simple: upon receiving a packet for dissemination, transmit in the next time slot reserved. For slow nodes, the schedule is determined by a randomized contention resolution protocol that can be run in the reserved time slots. The protocol 
is simple: upon receiving a packet for dissemination, each slow node transmits repeatedly with probability $1/(4K(T_{\min},s))$, until the packet is delivered.

In the rest of this section, we bound the length of the broadcasting schedule. The following upper bound will be used.

\begin{lemma}
\label{lemma:maxrank}
The maximum rank of a LABST $T$ with source node $s$ is $$R(T)\leq \lceil M(T_{\min},s)\rceil.$$
\end{lemma}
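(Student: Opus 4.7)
The plan is to bound $R(T)$ by charging each rank increment to at least one unit of affectance on a specific tree link, and to sum these along a root-to-leaf path whose total affectance is controlled by $M(T_{\min},s)$.

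First I would establish the structural identity that, after Algorithm~\ref{alg:prunaff} terminates, for every internal node $u$ the rank satisfies $r(u) = \delta(u) + \max_{c} r(c)$, where the max is over $u$'s children in $T$ and $\delta(u)\in\{0,1\}$ records whether Phase~1 at distance $d(u)$ fired on $u$. The reason is that the outer loop (Line~\ref{line:upwards}) visits distances bottom-up, so when Phase~2 at the previous outer iteration $d_0=d(u)+1$ executes on layer $d(u)$, the children's ranks are already final, and Phase~2 sets $r(u)$ to their maximum. Phase~1 at iteration $d_0=d(u)$ can subsequently increment $r(u)$ by exactly one and at most once, because after the update (Line~\ref{line:increaserank}) the algorithm removes $u$ from $F_{d(u)}^{r}$ at Line~\ref{line:toslow} and the new rank $r+1$ strictly exceeds every child's rank, so $u$ cannot re-enter any fast set at its own distance during the remaining iterations.

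Next I construct, starting at $s$, a root-to-leaf path $s=v_0,v_1,\dots,v_k$ inductively: whenever $\delta(v_i)=1$ let $\ell_i$ be the link that triggered Phase~1 on $v_i$ and set $v_{i+1}=c(\ell_i)$; otherwise pick any maximum-rank child. In the triggered case $c(\ell_i)$ has rank equal to the pre-update rank of $v_i$, which by the identity above is the maximum of $v_i$'s children's ranks, so $v_{i+1}$ is always a maximum-rank child of $v_i$. Telescoping the identity along this path, with base case $r(v_k)=1$ (leaves retain the initial rank from Line~\ref{line:initrank}), yields $r(s) = 1 + \sum_{i=0}^{k-1}\delta(v_i)$. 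For every $i$ with $\delta(v_i)=1$, the Phase~1 check at Line~\ref{line:reduce} guarantees $a_{F_{d(v_i)}^{r(v_i)-1}}((v_i,v_{i+1}))\ge 1$; since $F_d^r\subseteq V_d(T)=V_d(T_{\min})$ for every $d,r$, and affectance is monotone in the transmitting set, this gives $a_{V_{d(v_i)}(T_{\min})}((v_i,v_{i+1}))\ge 1$. Summing these inequalities along the root-to-leaf path $p=(v_0,\dots,v_k)\in P(T_{\min})$, and using non-negativity of affectance for steps with $\delta(v_i)=0$, produces
$$\sum_{i=0}^{k-1}\delta(v_i) \;\le\; \sum_{i=0}^{k-1} a_{V_{d(v_i)}(T_{\min})}((v_i,v_{i+1})) \;\le\; M(T_{\min},s).$$
The leftmost quantity is a non-negative integer, hence at most $\lfloor M(T_{\min},s)\rfloor$, and combining with the telescoped identity and the fact that ranks are non-decreasing towards the root yields $R(T)=r(s)\le \lceil M(T_{\min},s)\rceil$.

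The main obstacle I anticipate is making the structural identity of the first paragraph fully precise. Algorithm~\ref{alg:prunaff} interleaves Phase~1 and Phase~2 updates many times over each layer (every outer iteration $d_0>d$ revisits layer $d$ in Phase~2) and manipulates the fast sets $F_d^r$ in both phases. Verifying that when Line~\ref{line:reduce} fires the value $r(c(\ell))$ read in Line~\ref{line:increaserank} is already the child's final rank, and that $u$ is never re-inserted into any fast set at its own layer after its own Phase~1 update, requires a careful induction on the outer iteration index. Once the identity is in place, the charging argument and the integer-rounding step are routine.
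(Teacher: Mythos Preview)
Your argument is essentially the paper's: charge each rank increment along a root-to-leaf path to at least one unit of layer-affectance on a tree link, then bound the total by $M(T_{\min},s)$. The paper's proof is a single short paragraph that neither constructs the path explicitly nor establishes the structural identity $r(u)=\delta(u)+\max_c r(c)$ you isolate, so the obstacle you flag is exactly where the paper is informal; your write-up is strictly more careful than the original.

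One small slip in your last line: from $\sum_i \delta(v_i)\le \lfloor M\rfloor$ you conclude $r(s)\le \lceil M\rceil$, but $1+\lfloor M\rfloor=\lceil M\rceil$ fails when $M$ happens to be an integer (then your bound is only $r(s)\le M+1$). The paper's proof is too terse to tell whether it avoids the same off-by-one, and in any case the discrepancy is immaterial for the downstream asymptotic bounds in Theorem~\ref{thm:sched} and Corollary~\ref{c:broadcast}.
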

\begin{proof}
Consider the construction of a LABST $T$ in Algorithm~\ref{alg:prunaff}. 
Consider any path from root to leaf in $T_{\min}$.
Because initially all nodes have rank $1$ (cf. Line~\ref{line:initrank}), by definition of $M(T,s)$, the total affectance on this path is at most $M(T_{\min},s)$.
Each time that a node in such path increases its rank in Line~\ref{line:increaserank}, the node becomes slow in Line~\ref{line:toslow}, and will not be fast again because the rank updates are carried level-by-level upwards the tree (cf. Line~\ref{line:upwards}). 
Thus, after the transformation, the claimed bound holds because a value $\geq 1$ is reduced from the total affectance due to fast nodes in the path (cf. Line~\ref{line:reduce}).
\end{proof}

\begin{theorem}
\label{thm:sched}
For any given network of $n$ nodes with a source node, diameter $D$, and affectance degradation distance $\alpha$, there exists a broadcasting schedule of length at most
$$D+2h\lceil M(T_{\min})\rceil(\lceil M(T_{\min})\rceil+16 K(T_{\min})\ln n),$$
where $h=\max\{3,\alpha\}$.
\end{theorem}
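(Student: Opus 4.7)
My plan is to bound the delay along an arbitrary root-to-leaf path $P = (s = u_0, u_1, \ldots, u_d)$ in the LABST $T$, then take the maximum over paths (using $d \leq D$) and apply Lemma~\ref{lemma:maxrank} ($R(T) \leq \lceil M(T_{\min})\rceil$) to convert the resulting expression into the stated bound. As a preliminary, I would verify that the slot assignment prevents harmful interference: two fast transmitters sharing a slot satisfy $d_1 - d_2 \equiv 2h(r_1 - r_2) \pmod{2hR(T)}$ and a fast/slow pair sharing a slot satisfies $d_{\mathrm{fast}} \equiv d_{\mathrm{slow}} + h \pmod{2h}$, so in either case the two transmitters are at least $h \geq \alpha$ tree layers (hence graph hops) apart. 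By the affectance degradation distance, the only interference on a link $(u_k, u_{k+1})$ therefore comes from nodes at layer $d(u_k)$.

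Next I would decompose $P$ into maximal segments of constant rank. Because phase~2 of Algorithm~\ref{alg:prunaff} ensures that at the end of the construction every node's rank is at least the maximum rank of its children, ranks along $P$ are non-increasing, so there are at most $R := R(T)$ such segments. Inside a segment every interior node is fast (each has a same-rank child on $P$), so the only potentially slow nodes on $P$ are the at most $R$ segment endpoints. The key pipelining observation is that two consecutive fast nodes of the same rank $r$ have reserved slots differing by exactly $1 \pmod{2hR}$; thus, after an initial wait of at most $2hR-1$ slots for the segment's first reserved fast slot, the packet propagates through the segment's fast interior at one hop per slot. Summing over $\leq R$ segments, the total fast-progress contribution is at most $d + 2hR^2$.

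The remaining contribution comes from slow segment endpoints, each handled by randomized contention resolution. My plan is to show that in every reserved slow slot at layer $d(u)$, slow node $u$ successfully delivers to its tree-child $v$ with probability $\Omega(1/K)$. Conditioning on $u$ transmitting (probability $p = 1/(4K)$), the expected affectance on $(u,v)$ from other transmitting slow nodes $V' \subseteq V_d$ is $p \cdot a_{V' \setminus \{u\}}((u,v))$, and aggregating over the contending links $\ell \in L(V')$ together with the definition $K = \max_{V'} a_{V'}(L(V'))/|L(V')|$ yields $\sum_{\ell \in L(V')} a_{V'}(\ell) \leq K|L(V')|$; a Markov-style argument then produces a per-link success probability of order $1/K$. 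A standard geometric-tail bound gives that $16K \ln n$ reserved slow slots suffice to succeed with probability $1 - 1/\mathrm{poly}(n)$, and a union bound over the at most $n$ slow nodes in the tree makes this uniform. Since reserved slow slots recur every $2h$ time slots, each slow node contributes at most $32 h K \ln n$ time slots of delay, totalling $32 h R K \ln n$.

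Adding the two contributions gives $d + 2hR(R + 16K \ln n) \leq D + 2h\lceil M(T_{\min})\rceil(\lceil M(T_{\min})\rceil + 16 K \ln n)$, which is the desired bound. The principal obstacle I foresee is the contention-resolution step: since $K$ is only an \emph{average} over $L(V')$, one cannot directly bound the affectance on any single link by $K$, and a clean argument must either show that at each moment a constant fraction of the unsatisfied contending links are low-affectance (each succeeding with probability $\Omega(1/K)$ per reserved slot), or track a potential function over the shrinking set of not-yet-delivered child-links so that the averaging bound can be applied to the remaining set.
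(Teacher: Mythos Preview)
Your proposal follows essentially the same route as the paper's proof: decompose a root-to-leaf path into at most $R(T)$ maximal constant-rank segments, bound the fast contribution by $D+2hR(T)^2$ via the one-hop-per-slot pipelining within a segment plus an initial wait of at most $2hR(T)$ per segment, bound the slow contribution by $32hR(T)K(T_{\min})\ln n$ via contention resolution, and then invoke Lemma~\ref{lemma:maxrank}.

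The one substantive point worth noting concerns exactly the obstacle you flag. The paper does \emph{not} re-derive the contention-resolution bound from scratch; it invokes a known result (Algorithm~1 and its analysis in~\cite{KVaff}) stating that with transmission probability $q\le 1/(4\max_{S\subseteq V_d} a_S(L(S))/|L(S)|)$, after $4c\ln|V_d|/q$ reserved slots every link in the layer has succeeded with probability at least $1-|V_d|^{1-c}$. Your instinct that the averaging definition of $K$ forces one to track a shrinking set of not-yet-satisfied links is precisely what that external result does; the paper simply cites it rather than reproving it. One minor addition you should make explicit in the correctness paragraph: when two fast transmitters share a slot and lie in the same layer, the LABST construction (Line~\ref{line:reduce} of Algorithm~\ref{alg:prunaff}) guarantees $a_{F_d^r}(\ell)<1$ on each of their outgoing tree links, which is what makes the fast transmissions succeed.
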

\begin{proof}
First we show that the broadcasting schedule is correct. 
Consider any pair of nodes $u,v \in V$ transmitting in the same time slot.
If $d(u)=d(v)$ and they are both fast nodes with the same rank, the affectance on each other's links is low by definition of the LABST.
If $d(u)=d(v)$ and they are both slow nodes, the contention resolution protocol will disseminate the packet to the next layer.
Otherwise, given the slot reservation, it is either $|d(u)-d(v)|= 2h$ if $u$ and $v$ are both fast or both slow, or $|d(u)-d(v)|= h$ if one is slow and the other fast. Given that $h\geq \alpha$, the affectance on each other's links is negligible. 

To prove the schedule length, consider any path $p$ from root to leaf in the LABST $T$. The path $p$ can be partitioned into consecutive maximal subpaths according to rank. In each maximal subpath $p'\in p$ of consecutive nodes of the same rank, the first node may have to wait up to $2hR(T)$ slots for the next reserved time slot, but after that all nodes except the last one transmit in consecutive time slots. Given that there are at most $R(T)$ such maximal subpaths and that their aggregated length is at most $D(T)$, the schedule length in the fast nodes of path $p$ is at most $D(T)+2hR(T)^2\leq D+2hR(T)^2$, where the latter inequality holds because $T$ is a BFS tree.

Consider now any link $\ell\in p$ where the rank changes, that is $r(p(\ell))\neq r(c(\ell))$ and $p(\ell)\in S_{d(p(\ell))}\subseteq V_{d(p(\ell))}$. Recall that the schedule in such link is defined by a randomized contention resolution protocol where each node transmits with probability $1/(4K(T_{\min}))$, where
$$K(T_{\min}) = \max_d \max_{V'\subseteq V_d(T_{\min})} \frac{1}{|L(V')|} a_{V'} (L(V')),$$
where $L(V')$ is the set of BFS tree links between $V'$ and nodes at distance $d+1$ of the source,
and $V_d(T_{\min})$ is the set of nodes at distance $d$ from the source in $T_{\min}$.
For a probability of transmission $$q\leq \frac{1}{4 \max_{S\subseteq V_{d(p(\ell))}}  a_{S} (L(S))/|L(S)|},$$ 
it was proved in~\cite{KVaff} that
the probability that 
there is still some link in $S$ where no transmission was successful
after $4c\ln |V_{d(p(\ell))}|/q$ time slots running Algorithm 1 in~\cite{KVaff}, 
is at most $|V_{d(p(\ell))}|^{1-c}$, $c>1$. 
Given that $1/(4K(T_{\min}))$ verifies such condition, we know that after $$16cK(T_{\min})\ln |V_{d(p(\ell))}| \leq 16cK(T_{\min})\ln n$$ (reserved) time slots, the transmission in link $\ell$ has been successful with positive probability. 
Given that there are at most $R(T)-1$ links where the rank changes, using the union bound, we know that after $(R(T)-1)16cK(T_{\min})\ln n$ (reserved) time slots all slow nodes have delivered their packets with some positive probability, which shows the existence of a deterministic schedule of such length\footnote{In settings with collision detection and where the affectance on any given link is $O(n)$, a big enough constant $c>1$ yields a randomized protocol that succeeds with probability $1-1/n$.}. 
The time slots reserved for slow nodes appear with a frequency of $2h$. Thus, the schedule length in the slow nodes of path $p$ is at most $2h(R(T)-1)16cK(T_{\min})\ln n\leq 32h R(T)K(T_{\min})\ln n$, for $c= R(T)/(R(T)-1)$.

Adding both schedule lengths we have
$$D+2hR(T)^2+32h R(T)K(T_{\min})\ln n$$
Replacing the bound on $R(T)$ in Lemma~\ref{lemma:maxrank}, the claim follows.
\end{proof}

For networks with affectance degradation distance $\lceil\log n\rceil$, Theorem~\ref{thm:sched} yields the following corollary.
\begin{corollary}
\label{c:broadcast}
For any given network of $n\geq 8$ nodes, diameter $D$, and affectance degradation distance $\lceil\log n\rceil$, there exists a broadcasting schedule of length 
$$D+O\left(\log^2 n M(T_{\min}) \left(  \frac{M(T_{\min})}{\log n} + K(T_{\min}) \right)\right).$$
\end{corollary}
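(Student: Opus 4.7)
The plan is to derive this corollary as an immediate specialization of Theorem~\ref{thm:sched} to the case $\alpha=\lceil\log n\rceil$. First, I would observe that the value $h=\max\{3,\alpha\}$ appearing in Theorem~\ref{thm:sched} equals $\lceil\log n\rceil$ whenever $n\geq 8$, since $\lceil\log 8\rceil=3$; this is precisely the reason for the assumption $n\geq 8$ in the statement. Plugging this choice of $h$ into the length bound of Theorem~\ref{thm:sched} gives
$$D + 2\lceil\log n\rceil\lceil M(T_{\min})\rceil\bigl(\lceil M(T_{\min})\rceil + 16K(T_{\min})\ln n\bigr).$$

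Next, I would perform a routine algebraic simplification. The ceilings and multiplicative constants (including the conversion factor $\ln 2$ between $\ln n$ and $\log n$) can be absorbed into the $O(\cdot)$ notation, yielding
$$D + O\bigl(\log n\cdot M(T_{\min})^2 + \log^2 n\cdot M(T_{\min})K(T_{\min})\bigr).$$
Factoring $\log^2 n\cdot M(T_{\min})$ out of both summands in the $O$-term produces exactly the expression stated in the corollary.

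Because the argument is a pure substitution into an already-established bound, there is essentially no obstacle to overcome; the only subtle point is ensuring that the regime of $n$ is restricted so that the $\max\{3,\alpha\}$ in Theorem~\ref{thm:sched} collapses to $\lceil\log n\rceil$, which is why $n\geq 8$ is assumed. All of the substantive work---the construction of the LABST, the rank assignment that bounds $R(T)\leq\lceil M(T_{\min})\rceil$ via Lemma~\ref{lemma:maxrank}, the slot reservation scheme that isolates interference between layers, and the randomized contention-resolution analysis invoked from~\cite{KVaff}---has already been carried out in the proof of Theorem~\ref{thm:sched}, so nothing further needs to be done here.
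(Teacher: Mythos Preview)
Your proposal is correct and matches the paper's approach exactly: the paper presents Corollary~\ref{c:broadcast} as an immediate consequence of Theorem~\ref{thm:sched} with $\alpha=\lceil\log n\rceil$, without further proof. Your explicit identification of why $n\geq 8$ is needed (so that $h=\max\{3,\alpha\}=\lceil\log n\rceil$) and your algebraic simplification into the stated $O(\cdot)$ form spell out precisely the substitution the paper leaves implicit.
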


For comparison, for less contentious networks where affectance at more than one hop is not present (\RN model), using a GBST a broadcast schedule of length $D+O(\log^3 n)$ was shown in~\cite{GPQ:gossipTree} and of
length $O(D+\log^2 n)$ was proved in~\cite{KowalskiP07}.


\section{A Dynamic \mmb Protocol}
\label{section:prot}

In this section, we present our \mmb protocol and we bound its competitive throughput. The protocol uses the LABST\footnote{We refer to the tree and the broadcast schedule indistinctively.} 
presented in Section~\ref{sec:BT}.\footnote{%
Any broadcast schedule that works under the affectance model could be used.%
} 
The intuition of the protocol is the following. Each source node has a (possibly empty) queue of packets that have been injected for dissemination. Then, starting with an arbitrary source node $s\in S$ with ``large enough'' number of packets in its queue, packets are disseminated through a LABST rooted at $s$. If the number of packets in the queue of $s$ becomes ``small'', $s$ stops sending packets and, after some delay to clear the network, another source node $s'\in S$ starts disseminating packets through a LABST rooted at $s'$. The procedure is repeated following the order of a list of source nodes, which is dynamically updated according to queue sizes to guarantee good throughput.
Packets from any given source are pipelined with some delay to avoid collisions and affectance. 
Being a radio broadcast network, packets might be received earlier than expected using links or time slots other than those defined by the LABST. If that is the case, to guarantee the pipelining, nodes ignore those packets. 

The following notation will be also used. 
The LABST rooted at $s\in S$ is denoted as $T(s)$.
We denote the length of the broadcast schedule (time to deliver to all nodes) from $s$ as $\Delta(s)$, and $\Delta=\max_{s\in S} \Delta(s)$.
Let the pipeline delay (the time separation needed between consecutive packets to avoid collisions and affectance) from $s$ be $\delta(s)$, and $\delta=\max_{s\in S} \delta(s)$.
Given a node $i\in S$ and time slot $t$, the length of the queue of $i$ is denoted $\ell(i,t)$.
Let the length of all queues at time $t$ be $\ell(t)=\sum_{i\in S} \ell(i,t)$.
We say that, at time $t$, a node $i$ is
\defn{empty} if $\ell(i,t)<\Delta$,
\defn{small} if $\Delta\leq\ell(i,t)<n\Delta$, and
\defn{big} if $\ell(i,t)\geq n\Delta$.

Consider the following \defn{\mmb Protocol}.

\begin{enumerate}
\item For each source node $s\in S$ define a LABST rooted at $s$.
\item Define a Move-big-to-front (MBTF) list~\cite{ChlebusKR09} of source nodes, initially in any order. According to this list, source nodes circulate a token. While being disseminated, the token has a time-to-live counter of $\Delta$, maintained by all nodes relaying the token. A source node $s$ receiving the token has to wait for the token counter to reach zero before starting a new transmission. Let the time slot when the counter reaches zero be $t$. Then, node $s$ does the following depending on the length of its queue.
\begin{enumerate}
\item If $s$ is empty at $t$, it passes the token to the next node in the list. We call this event a \defn{silent round}. 
\item If $s$ is small at $t$, it broadcasts $\Delta$ packets pipelining them in intervals of $\delta$ slots. After $\delta$ more slots, it passes the token to the next node in the list. 
\item If $s$ is big at $t$, it moves itself to the front of the list. We call this event a \defn{discovery}. Then, $s$ broadcasts packets pipelining them in intervals of $\delta$ slots as long as it is big, but a minimum of $\Delta$ packets. With the first of these packets $s$ broadcasts the changes in the list. $\delta$ more slots after transmitting these packets, it passes the token to the next node in the list.
\end{enumerate}
\end{enumerate}

The following theorem shows an upper bound on the number of packets in the system at any time, which allows to prove the competitive throughput of our protocol. The proof structure is similar to the proof in~\cite{ChlebusKR09} for MBTF, but many details have been redone to adapt it to a multihop network.

\begin{theorem}
\thmlabel{pipeline}
For any given network of $n$ nodes,
at any given time slot $t$ of the execution of the \mmb protocol defined, 
the overall number of packets in queues is $\ell(t)< (t\delta/(1+\delta))+2\Delta n^2$.
\end{theorem}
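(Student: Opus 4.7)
The plan is to mirror the stability analysis of the single-hop MBTF protocol in~\cite{ChlebusKR09}, adapting each step to the multihop/pipelined setting of our protocol: a token-holding action costs $\Theta(\Delta)$ slots instead of $O(1)$, and pipelined packet transmissions are separated by $\delta$ slots. The overall intuition is that, up to an additive $O(\Delta n^2)$ overhead, the protocol delivers packets at the ``pipelined'' rate of $1/(1+\delta)$ per slot, which exactly cancels against the worst-case queue-growth rate $\delta/(1+\delta)$ under the feasibility-maximal injection rate of $1$.

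First, I would partition $[1,t]$ into a sequence of \emph{events}, each corresponding to one pass of the token through some source, and label every event as either silent, small, or a discovery. An event-by-event accounting gives the durations and delivered packets: a silent event costs $\Delta$ slots and delivers $0$ packets; a small event costs $\Delta(1+\delta)$ slots and delivers exactly $\Delta$ packets; and a discovery broadcasting $k\geq\Delta$ packets costs $\Delta + k\delta \leq (1+\delta)k$ slots (using $\Delta\leq k$). Summing over all events in $[1,t]$,
\begin{align*}
t \;=\; \Delta\, N_s(t) \;+\; \sum_{\text{non-silent events}} \text{duration} \;\leq\; \Delta\, N_s(t) + (1+\delta)\,D(t),
\end{align*}
where $N_s(t)$ counts silent events and $D(t)$ is the number of packets delivered by time $t$. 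Combining with feasibility $I(t)\leq t$ and $\ell(t)=I(t)-D(t)$ yields
\[
\ell(t) \;\leq\; \tfrac{\delta}{1+\delta}\,t \;+\; \tfrac{\Delta\,N_s(t)}{1+\delta}.
\]

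Second, I would complete the argument by showing that $\Delta\, N_s(t)/(1+\delta) < 2\Delta n^2$ \emph{whenever} the theorem's bound would otherwise be violated. Here the MBTF structure is crucial: since every discovery moves its source to the front of the list, between any two consecutive discoveries at most $n-1$ silent events can occur, giving $N_s(t) \leq n\bigl(N_b(t)+1\bigr)$. To turn this into an absolute bound in the ``loaded'' regime---i.e., when $\ell(t)$ is close to the theorem's bound---I would define a potential $\Phi$ counting misorderings in the MBTF list with respect to the current sorted order by queue size (scaled by $\Delta$). A direct analysis shows that $\Phi$ is non-negative, upper-bounded by $O(\Delta n^2)$, increases by at most $n$ per silent event, and drops by at least $1$ per discovery as long as $\ell(t)$ is large; amortizing then yields $N_s(t) = O(n^2)$ throughout any window where $\ell(t) \geq 2\Delta n^2$, and the claim follows. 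The complementary regime $\ell(t) < 2\Delta n^2$ is immediate from the theorem's additive slack.

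The principal obstacle is precisely this amortized bound on $N_s(t)$ in the second step. In the single-hop proof of~\cite{ChlebusKR09} a silent slot costs only $O(1)$ and can be absorbed into a constant-per-source budget, whereas here each silent event costs $\Theta(\Delta)$ and must be balanced against $\Theta(\delta)$-slot pipelining gaps inside non-silent events. Designing the potential function so that discoveries simultaneously ``pay'' for the $\Theta(\Delta)$ overhead of preceding silent events and for the $\Theta(\delta)$ pipelining slack of small events, while remaining robust to the adversary's freedom to distribute injections across sources, is where the bulk of the adaptation work lies.
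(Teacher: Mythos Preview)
Your first step---the event accounting leading to $\ell(t)\le \frac{\delta}{1+\delta}t+\frac{\Delta N_s(t)}{1+\delta}$---is essentially the bookkeeping the paper does implicitly, and it is correct.

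The gap is in the second step. The potential $\Phi$ you describe does not deliver the bound: if $\Phi$ \emph{increases} by at most $n$ per silent event and \emph{decreases} by at least $1$ per discovery, amortization gives $N_b\le nN_s+O(\Delta n^2)$, which bounds discoveries in terms of silent events, not the other way around. More fundamentally, the inequality $N_s(t)\le n(N_b(t)+1)$ that you derive from ``at most $n-1$ non-discovery events between consecutive discoveries'' is true but useless, because $N_b$ can be arbitrarily large over the window. What you need is a bound on $N_s$ by $O(n^2)$ \emph{regardless} of $N_b$, and a misordering potential does not see why repeatedly re-discovering the same nodes should not generate fresh silent events.

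The paper avoids a potential function entirely. It argues by contradiction and restricts attention to a maximal suffix window $[t-T,t]$ on which $\ell(\cdot)\ge n^2\Delta$ throughout; by pigeonhole some source is big at every instant, so every full pass of the list contains a discovery. Let $C$ be the set of sources that are \emph{ever} big in this window. The crucial combinatorial fact---which replaces your potential argument---is that once a node $i\in C$ is discovered for the first time, it is \emph{never empty again} during the window: at that moment $\ell(i,\cdot)\ge n\Delta$, and before $i$ could drain below $\Delta$ it would have to hold the token at least $n-1$ more times as a small node, but between any two of those times some other $C$-node is discovered and jumps ahead of $i$, so $i$ cannot be reached $n$ times without itself becoming big again. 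Hence all silent rounds are pre-first-discovery, and a direct count gives at most $|C|(n-|C|)$ silent rounds through $S\setminus C$ plus at most $|C|(|C|-1)$ through $C$, i.e.\ at most $|C|(n-1)<n^2$ silent rounds in total. Plugging this into your accounting (together with the initial condition $\ell(t-T)\le n^2\Delta+\frac{(t-T)\delta}{1+\delta}$ at the left end of the window) yields the contradiction. The piece you are missing is precisely this ``once discovered, never silent'' argument; with it, no potential function is needed.
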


\begin{proof}
For the sake of contradiction, assume that there exists a time $t$ such that the overall number of packets in the system is $\ell(t)\geq (t\delta/(1+\delta))+2\Delta n^2$. The number of packets in queues at the end of any given period of time is at most the number of packets in queues at the beginning of such period, plus the number of time slots when no packet is delivered, given that at most one packet is injected in each time slot. We arrive to a contradiction by upper bounding the number of time slots when no packet is delivered within a conveniently defined period before $t$. 
Consider the period of time $T$ such that
\begin{align}
\ell(t-T) &\leq n^2\Delta+\frac{(t-T)\delta}{1+\delta} \eqlabel{ub}\\
\forall t'\in[t-T,t] : \ell(t') &\geq n^2\Delta \eqlabel{lb}\\
\ell(t) &\geq (t\delta/(1+\delta))+2\Delta n^2 \eqlabel{endT}
\end{align}
From now on, the analysis refers to the period of time $T$. We omit to specify it for clarity.
Let $C\subseteq S$ be the set of nodes that are big at some point. 
Due to the pigeonhole principle and Equation~\eqref{lb}, we know that for each time slot 
there is at least one big source node. In other words, the token cannot be passed throughout the whole list without at least one discovery.
As a worst case, assume that only nodes in $C$ have packets to transmit. 
For each node $i\in C$, the token has to be passed through at most $|S\setminus C|\leq n-|C|$ nodes that are not in $C$ before $i$ is discovered, because after $i$ is discovered no node in $S\setminus C$ will be before $i$ in the list. 
Hence, there are at most $|C|(n-|C|)$ silent rounds, each of length $\Delta$ for token pass.
So, due to passing the token through nodes in $S\setminus C$, there are at most $|C|(n-|C|) \Delta$ time slots when no packet is delivered.

We bound now the time slots when no packet is delivered due to passing the token through nodes in $C$ before being discovered for the first time. 
Consider any given node $i\in C$. The argument is similar to the previous case. Any other node $j\in C$ that is discovered before $i$ is moved to the front of the list. If $i$ is going to be before $j$ in the list later, it is not going to happen before $i$ is discovered for the first time. 
Then, before $i$ is discovered, 
it may hold the token at most $|C|-1$ times. As a worst case, assume that for each of these times $i$ is empty.
Hence, there are at most $|C|(|C|-1)$ silent rounds, each of length $\Delta$ for token pass.
So, due to passing the token through nodes in $C$ before being discovered, 
there are at most $|C|(|C|-1) \Delta$ time slots when no packet is delivered.

It remains to bound the time slots when no packet is delivered due to pipelining and passing the token through nodes in $C$ after being discovered. 
Consider any given node $i\in C$ after being discovered. 
If $i$ is big during the rest of $T$, 
it broadcasts packets pipelining them in intervals of $\delta$ slots. If instead $i$ becomes small during $T$, 
$i$ will have $\Delta$ packets to transmit for at least $n-1$ times that holds the token afterwards before becoming empty, because right after becoming small it has at least $(n-1)\Delta$ packets in queue. And there are at most $n-1$ nodes in $C$ that will not be behind $i$ in the list until $i$ becomes big again. Hence, $i$ always has $\Delta$ packets to transmit after being discovered the first time. After becoming small, $i$ has to pass the token to the next node in the list introducing a delay of $\Delta$. As a worst case scenario, we assume that upon each discovery of each node $i\in C$, only $\Delta$ packets are broadcast before passing the token. Then, for each $\Delta$ packets delivered, 
there are at most $\Delta+\Delta(\delta-1)=\Delta\delta$ time slots when no packet is delivered, over a period of $\Delta+\Delta\delta = \Delta(1+\delta)$ time slots.
Because $C$ is the set of nodes that are discovered in $T$, we can bound the number of batches of $\Delta$ packets delivered in $T$ by $\lfloor T/(\Delta(1+\delta))\rfloor \leq T/(\Delta(1+\delta))$.
Then, there are at most $T\Delta\delta/(\Delta(1+\delta))  = T\delta/(1+\delta)$ time slots when no packet is delivered due to nodes in $C$ after being discovered.

Combining these bounds with Equation~\eqref{ub}, we have that there are at most
\begin{align*}
n^2\Delta+\frac{(t-T)\delta}{1+\delta} &+ |C|(n-|C|) \Delta+|C|(|C|-1) \Delta + \frac{T\delta}{1+\delta}\\
&= n^2\Delta+\frac{t\delta}{1+\delta}+\Delta|C|(n-1)\\
&< \frac{t\delta}{1+\delta}+2\Delta n^2
\end{align*}
 time slots when no packet is delivered. Which is a contradiction.
\end{proof}

\begin{lemma}
\label{lemma:convergence}
There exists a \mmb protocol that achieves a competitive throughput of at least 
$$\lim_{t\to\infty} \frac{1}{1+\delta}-\frac{2\Delta n^2}{t}.$$
\end{lemma}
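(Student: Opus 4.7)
The plan is to combine the queue-size bound from \thmref{pipeline} with a matching upper bound on the optimal throughput. Let $A(t)$ denote the total number of packets injected by time $t$. By the feasibility constraint on the adversary, the injection rate is at most one packet per slot, so $A(t) \le t$ and consequently $d_{OPT}(t) \le A(t) \le t$.

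The accounting underlying the proof of \thmref{pipeline} treats each injected packet as either already delivered or still ``in queue'' (i.e., not yet delivered from any source's perspective), giving $A(t) = d_{ALG}(t) + \ell(t)$. I will first argue that the adversary's worst case for the competitive ratio is injecting at the maximum feasible rate, $A(t) = t$. Substituting the bound $\ell(t) < t\delta/(1+\delta) + 2\Delta n^2$ from \thmref{pipeline} then yields
$$d_{ALG}(t) \;>\; t - \frac{t\delta}{1+\delta} - 2\Delta n^2 \;=\; \frac{t}{1+\delta} - 2\Delta n^2.$$
Dividing by $d_{OPT}(t) \le t$ gives $d_{ALG}(t)/d_{OPT}(t) > \frac{1}{1+\delta} - \frac{2\Delta n^2}{t}$, and taking $t \to \infty$ delivers the claim.

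The main subtlety is confirming that injection rate $1$ is indeed the adversary's worst choice. For a sustained lower rate $\rho < 1$, OPT delivers roughly $\rho t$ packets with bounded latency. If $\rho \le 1/(1+\delta)$ the protocol keeps up --- $\ell(t)$ stays bounded --- so the ratio tends to one. If $\rho > 1/(1+\delta)$, ALG's delivery rate is at least $1/(1+\delta)$ (as extracted from \thmref{pipeline}) while OPT's is $\rho$, giving a limiting ratio of at least $1/(\rho(1+\delta)) \ge 1/(1+\delta)$. Thus the bound is tight precisely at $\rho = 1$, and the formal argument needs only to verify that Theorem~\ref{thm:pipeline}'s bound remains monotone in the instantaneous injection rate so that the above worst-case reduction is legitimate.
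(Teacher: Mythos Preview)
Your core argument matches the paper's proof exactly: bound $d_{OPT}(t)\le t$, write $d_{ALG}(t)=A(t)-\ell(t)$, substitute the queue bound from \thmref{pipeline}, and divide. The paper simply writes $\frac{d_{ALG}(t)}{d_{OPT}(t)}\ge \frac{t-nd_{ALG}(t)}{t}$ with $nd_{ALG}(t)=\ell(t)$, implicitly taking $A(t)=t$ without comment; your first two paragraphs are precisely this.

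Your final paragraph, which flags and tries to close the ``is rate-$1$ injection the worst case?'' gap, is actually more careful than the paper's own proof, which glosses over the point entirely. One caution on that paragraph: the assertion that ALG's delivery rate is at least $1/(1+\delta)$ whenever $\rho>1/(1+\delta)$ does \emph{not} follow from the statement of \thmref{pipeline} alone --- that statement only yields $d_{ALG}(t)>\rho t - t\delta/(1+\delta)-2\Delta n^2$, which meets $t/(1+\delta)$ only at $\rho=1$. To make your case analysis rigorous you genuinely need to go back into the proof of \thmref{pipeline} and extract the bound on the number of non-delivery slots (independent of the injection pattern, whenever queues exceed $n^2\Delta$), as you hint at the end. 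Since the paper does not address this issue at all, your treatment is already at least as complete.
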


\begin{proof}
A packet is delivered when it has been received by \emph{all} nodes.
The optimal algorithm delivers at most one packet per time slot, since any given node can receive at most one packet per time slot. 
Additionally, the injection is limited to be feasible, that is, there must exist an optimal algorithm OPT such that the latency of each packet is bounded for OPT. Thus, at most one packet may be injected in each time slot.
Then, the competitive throughput is at least 
\begin{align*}
\lim_{t\to\infty}\frac{d_{ALG}(t)}{d_{OPT}(t)} \geq
\lim_{t\to\infty} \frac{t-nd_{ALG}(t)}{t},
\end{align*}
where $nd_{ALG}(t)$ is the max number of packets that could not be delivered by ALG by time $t$.
Using the bound in~\thmref{pipeline} we have that
\begin{align*}
\lim_{t\to\infty}\frac{d_{ALG}(t)}{d_{OPT}(t)} 
&\geq \lim_{t\to\infty}\frac{t-(t\delta/(1+\delta))-2\Delta n^2}{t}\\ 
&\geq \lim_{t\to\infty} \frac{1}{1+\delta}-\frac{2\Delta n^2}{t}. 
\end{align*}
\end{proof}

The following theorem shows our main result.
\begin{theorem}
\label{thm:main}
For any given network of $n$ nodes, diameter $D$, and affectance degradation distance $\alpha$,
there exists a \mmb protocol that achieves a competitive throughput of at least 
$$\lim_{t\to\infty} \frac{1}{1+\delta}-\frac{2\Delta n^2}{t}.$$
Where 
$\Delta \leq D+2h\lceil M\rceil(\lceil M\rceil+16 K\ln n)$,
$h = \max\{3,\alpha\}$,
$K = \max_{s\in S}K(T_{\min}(s),s)$,
$M = \max_{s\in S}M(T_{\min}(s),s)$,
and $\delta \leq 16hK\ln n$.

\end{theorem}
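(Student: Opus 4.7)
The plan is to prove this theorem essentially as a corollary assembling the previously established pieces, since most of the heavy lifting is already done in Theorem~\ref{thm:sched} and Lemma~\ref{lemma:convergence}. Specifically, the protocol we exhibit is the \mmb Protocol of Section~\ref{section:prot} instantiated with, for each source $s\in S$, the LABST $T_{\min}(s)$ and its accompanying broadcasting schedule constructed in Section~\ref{sec:BT}. The throughput expression itself comes verbatim from Lemma~\ref{lemma:convergence}, so the theorem reduces to bounding the two protocol parameters $\Delta$ and $\delta$ that appear inside that formula.

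First I would bound $\Delta=\max_{s\in S}\Delta(s)$. By Theorem~\ref{thm:sched} applied to each source, the broadcasting schedule rooted at $s$ has length at most $D+2h\lceil M(T_{\min}(s),s)\rceil(\lceil M(T_{\min}(s),s)\rceil+16 K(T_{\min}(s),s)\ln n)$, where $h=\max\{3,\alpha\}$ is the separation parameter fixed by the LABST schedule. Taking the maximum over $s\in S$ and monotonicity in $M$ and $K$ gives $\Delta\le D+2h\lceil M\rceil(\lceil M\rceil+16K\ln n)$ with the global $K$ and $M$ defined in the statement.

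Second, I would bound the pipeline delay $\delta=\max_{s\in S}\delta(s)$. The fast part of the LABST schedule already pipelines naturally: a fast node at rank $r$ and depth $d$ transmits in every slot congruent to $d+2h(R(T)-r)\pmod{2hR(T)}$, so two successive packets injected into that pipe never overlap at any fast node once separated by a single reserved-slot period. The binding constraint is therefore the randomized contention-resolution subroutine run at slow nodes. As established inside the proof of Theorem~\ref{thm:sched}, a slow-layer link is resolved within $16cK\ln n$ of its reserved slots (for $c$ a constant of order $1$), and since slow-layer reserved slots occur once per $2h$ global slots, one packet vacates any given slow node within at most $16hK\ln n$ global slots. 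Releasing successive packets from the source at this spacing therefore suffices to guarantee that at every node of the tree the transmission of packet $i+1$ never overlaps with that of packet $i$, giving $\delta\le 16hK\ln n$.

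The main (and essentially only) obstacle is the pipelining argument: one must check that packets released $\delta$ slots apart do not collide even under the two-layer structure of fast/slow reserved slots, i.e., that the slow-node contention windows of successive packets do not overlap and that a packet entering a slow sub-schedule does not collide with the tail of the previous packet still resolving a lower layer. This follows from the fact that slow-node contention is local to a single reserved-slot stream and that $\delta$ is chosen to exceed the completion time of that stream, so the verification is a straightforward layer-by-layer induction on the tree once the reserved-slot structure is unpacked. Combining the two bounds with Lemma~\ref{lemma:convergence} yields the competitive throughput claim.
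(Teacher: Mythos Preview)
Your proposal is correct and follows essentially the same route as the paper: invoke Lemma~\ref{lemma:convergence} for the throughput expression, cite Theorem~\ref{thm:sched} to bound $\Delta$, and extract the slow-node contention-resolution time from the proof of Theorem~\ref{thm:sched} (multiplied by the reserved-slot spacing) to bound $\delta$. Your write-up is in fact slightly more careful than the paper's two-line argument, since you explicitly discuss why $\delta$-separated packets do not collide in the fast/slow slot structure; the paper simply asserts $\delta(s)=\max\{3,\alpha\}\cdot 16K(T_{\min}(s))\ln n$ with minimal justification. One small arithmetic wrinkle (present in the paper as well): slow reserved slots are spaced $2h$ apart, so the product of $16cK\ln n$ reserved slots with that spacing is $32hcK\ln n$, not $16hK\ln n$; this constant-factor looseness does not affect the asymptotic conclusion.
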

\begin{proof}
The length $\Delta(s)\leq \Delta$ of the broadcast schedule in a LABST rooted at $s$ is given in \thmref{sched}.
With respect to $\delta(s)\leq \delta$, as explained in the proof of \thmref{sched}, slow nodes at distance $d$ from the root deliver a packet to the next node in a path of a LABST $T(s)$ within $16cK(T_{\min}(s))\ln |V_d|$ with positive probability for any $c>1$. This shows the existence of a deterministic schedule of that length. Additionally, packets must be separated by at least $\max\{3,\alpha\}$ to avoid collisions and affectance from nodes at different distances from the source (see the proof of \thmref{sched} for further details). Then, it is $\delta(s)=\max\{3,\alpha\}16K(T_{\min}(s))\ln n$, for $c=\ln n / \ln|V_d|$. Replacing, the claim follows.
\end{proof}

The above theorem yields the following corollary that provides intuition. 
\begin{corollary}
For any given network of $n$ nodes, diameter $D$, and affectance degradation distance $\alpha$,
there exists a \mmb protocol such that the competitive throughput converges to
$$\frac{1}{O(\alpha K\log n)},$$
where $K=\max_{s\in S}K(T_{\min}(s),s)$.
\end{corollary}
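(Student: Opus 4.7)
The plan is to derive this corollary as an almost immediate consequence of Theorem~\ref{thm:main}, by taking the limit and simplifying the stated bounds to asymptotic form. Theorem~\ref{thm:main} guarantees a competitive throughput of at least
$$\lim_{t\to\infty} \frac{1}{1+\delta}-\frac{2\Delta n^2}{t},$$
with $\delta \leq 16hK\ln n$ and $h=\max\{3,\alpha\}$, where the parameters $\Delta$, $K$, $M$, $n$, $\alpha$ are all fixed properties of the given network and do not grow with $t$. The first step is therefore to observe that the additive correction $2\Delta n^2/t$ vanishes in the limit, leaving $1/(1+\delta)$ as the asymptotic lower bound on the competitive throughput.

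The second step is to unpack the bound on $\delta$. Substituting $\delta \leq 16hK\ln n$ gives $1+\delta \leq 1+16hK\ln n$. Since $h=\max\{3,\alpha\}$, we have $h \leq \alpha+3$, so $h = O(\alpha)$ in any regime where $\alpha = \Omega(1)$ (the only meaningful regime for an affectance degradation distance). Combined with $\ln n = \Theta(\log n)$, this yields $1+\delta = O(\alpha K\log n)$, so $1/(1+\delta) \geq 1/O(\alpha K\log n)$, which is the claim.

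The only bookkeeping point worth verifying — rather than a real obstacle — is that the vanishing of $2\Delta n^2/t$ is legitimate; inspecting the bound $\Delta \leq D+2h\lceil M\rceil(\lceil M\rceil+16K\ln n)$ from Theorem~\ref{thm:main} confirms that $\Delta$ depends only on the fixed network parameters, so the term indeed tends to zero as $t\to\infty$. No further argument is needed: all the substantive work — construction of the LABST, analysis of the slot reservations and contention resolution subroutine, and the MBTF-based queue-length bound in Theorem~\ref{thm:pipeline} — has already been carried out upstream, and the corollary is simply their asymptotic repackaging.
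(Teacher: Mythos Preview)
Your proposal is correct and matches the paper's approach: the corollary is stated immediately after Theorem~\ref{thm:main} with no separate proof, as it is just the asymptotic restatement obtained by letting $t\to\infty$ (so the $2\Delta n^2/t$ term vanishes) and absorbing $h=\max\{3,\alpha\}=O(\alpha)$ and $\ln n=\Theta(\log n)$ into the big-$O$. Your write-up simply makes explicit the two-line computation the paper leaves implicit.
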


To evaluate these results, it is important to notice that the competitive throughput bound was computed against a theoretical optimal protocol that delivers one packet per time slot, which is not possible in practice in a multi-hop network. For comparison, instantiating our interference model in the \RN model (no affectance), using the WEB protocol~\cite{CW:waveexp} for slow transmissions our \mmb protocol can be shown to converge to $1/O(\log^2 n)$. 
Furthermore, for single-instance multi-broadcast in \RN, Ghaffari et al. showed in~\cite{GHK:throughputLB} a throughput upper bound of $O(1/\log n)$ for any algorithm. Although this bound is worst-case, it can be compared with our $1/O(\alpha K \log n)$ that applies even under affectance.


\section{Simulations}
\label{s:simul}
We carried out simulations of our \mmb protocol under the affectance model. 

For each of three values of $n=16,32,64$, we produced three types of input networks to study the impact of high, medium and low interference. Namely, (1) a complete bipartite graph of $n$ nodes split evenly (high interference), (2) two overlapped trees obtained running BFS on a connected random graph from different nodes (medium interference), and (3) a single path of nodes (low interference). 
For these inputs, we set the affectance degradation distance $\alpha$ to $\max\{1,\sqrt{\log n}\}$, $\max\{1,(\log n)/2\}$, and $\max\{1,\log n\}$ respectively.
Recall that our model of general affectance comprises any interference effect (as long as it is additive). Hence, rather than restricting to a specific model for our simulations (such as SINR or \RN models which are geometric), we produced a less restrictive affectance matrix that comprises only the effect of the distance in hops from the interfering node to the receiver of the link.
Specifically, for each input graph $G=\{V,E\}$, we computed an affectance matrix as follows. Let $d(i,k)$ be the shortest distance in hops from node $i$ to node $k$. Then, for each node $i\in V$ and directed link $(j,k)\in E$,
\begin{align*}
A(i,(j,k)) &= \left\{
\begin{array}{ll}
0 & \textrm{if $d(i,k)\geq\alpha$}\\ 
1 & \textrm{if $d(i,k)=0$}\\ 
1/d(i,k)^2 & \textrm{if $0<d(i,k)<\alpha$}\\ 
\end{array} \right.
\end{align*}

To simulate our \mmb protocol, we first computed the broadcast schedule from each node (although later we select only a subset of nodes for injection).
For each node $i$, we computed a BFS tree rooted at $i$, we ranked the tree according to affectance, and we computed the average tree-layer affectance. To ensure polynomial running time, we computed \emph{some} BFS tree rooted at each node $i$, rather than aiming for the tree that minimizes the polynomial on the affectance metrics (cf. Section~\ref{sec:BT})
which would require to compute \emph{all} BFS trees rooted at $i$. For the same reason, we computed $K(T,i)=\max_d  a_{V_d(T)} L(V_d(T))/|L(V_d(T))|$, rather than aiming in each layer for the subset of tree links that maximizes the average affectance, which would require to compute the average affectance of \emph{all} subsets of links. 

For the input networks produced uniformly at random the above simplification should not have a significant impact in performance. Also, taking the tree-layer average affectance of all nodes in the layer guarantees that the contention resolution protocol used for slow nodes disseminates the packet to the next layer also under the affectance model, within the time bounds specified in the proof of Theorem~\ref{thm:sched}, since interference from \emph{all} nodes in the layer is taken into account.

With respect to the set of source nodes $S$, each node was chosen to be a source at random with probability $1/3$.
For each source $s$, we computed the length of the schedule (time to deliver to all other nodes) as $\Delta(s)=D(T)+2hR(T)^2+32h R(T)K(T)\ln n$, and the pipeline delay (time separation needed between consecutive packets) as $\delta(s)=16hK(T)\ln n$.
Notice that we used $D(T)$ and $R(T)$ rather than their bounds in Theorem~\ref{thm:sched}, since for the simulations we know their values.
Finally we computed $\Delta=\max_{s\in S}\{\Delta(s)\}$ and $\delta=\max_{s\in S}\{\delta(s)\}$.

The queue of one source node was initialized to $2\Delta\delta$ packets, and the rest was left empty. That is, initially there is one big node and the rest are empty, introducing overhead due to token passing through future injections.
To evaluate performance, packets were injected at different rates and with different policies. Specifically, we tested injection rates $1$ (feasibility upper bound), $1/\sqrt{\delta}$ and $1/\delta$ (approximate theoretical guarantee on delivery rate). Target source nodes for injections were chosen with four different policies: (1) uniformly among source nodes, (2) next (according to ID) source node after the node currently delivering, (3) always in the current source node, and (4) uniformly among all source nodes except the node currently delivering. The idea for policy (3) was to evaluate the system when the token does not circulate, whereas policy (4) can be seen as a worst-case injection since then the likelihood of having one node being big forever is low.
 
The results of the simulations for the second type of input graph (two overlapped trees obtained running BFS on a connected random graph from different nodes) are illustrated in Figures~\ref{fig:TvsI16} to~\ref{fig:Tvsdelta}. Similar results were obtained for the other two types of input graph. 

Figures~\ref{fig:TvsI16}, \ref{fig:TvsI32}, and~\ref{fig:TvsI64} show the competitive throughput as a function of the number of packets injected for $n=16, 32$, and $64$ respectively. In each of these figures, the 12 combinations of injection rates and policies described above are shown. It can be seen in these plots that the competitive throughput converges to a value $1/(1+\delta)$ when the injection rate is $1$, and much higher for smaller injection rates. 

In some cases when the injection rate is low, after an initial phase when it converges to high values, the competitive throughput is reduced due to overhead from token passing. Nevertheless, 
it can be seen in Figures~\ref{fig:TvsI16} and~\ref{fig:TvsI32} that still it converges to a value that is $1/(1+\delta)$ or higher.
For further illustration of these observations, we include a plot of competitive throughput versus $1+\delta$ in Figure~\ref{fig:Tvsdelta}. In this plot, the competitive throughput obtained for each combination of network size, injection rate and policy is compared with the $1/(1+\delta)$ lower bound proved in Theorem~\ref{thm:main}. As in previous plots, it can be seen that for the practical scenarios evaluated, our \mmb protocol behaves as shown in our analysis or better.

It is important to notice that the competitive throughput was computed against a theoretical optimal protocol that delivers a packet immediately after injection, which is not possible in practice in a multi-hop network.

\begin{figure*}[htb]
\begin{center}
\resizebox{\textwidth}{!}{\input{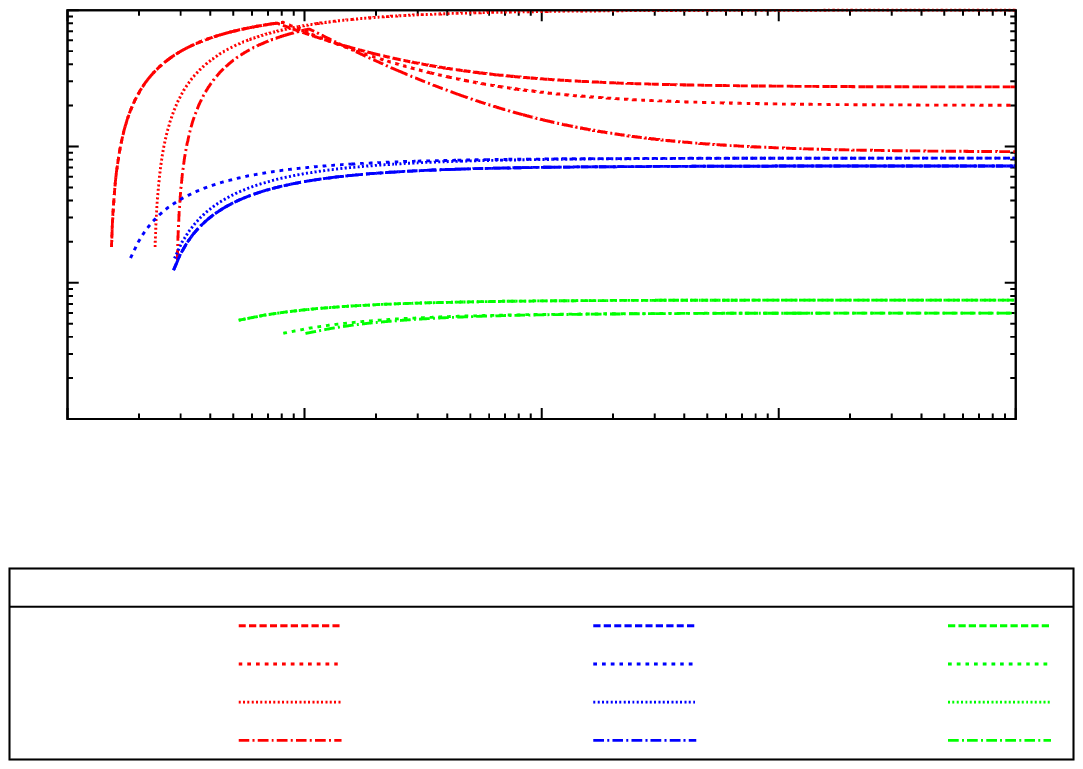}}
\caption{Competitive throughput vs. packets injected. $n=16$}
\label{fig:TvsI16}
\end{center}
\end{figure*}
\begin{figure*}[htb]
\begin{center}
\resizebox{\textwidth}{!}{\input{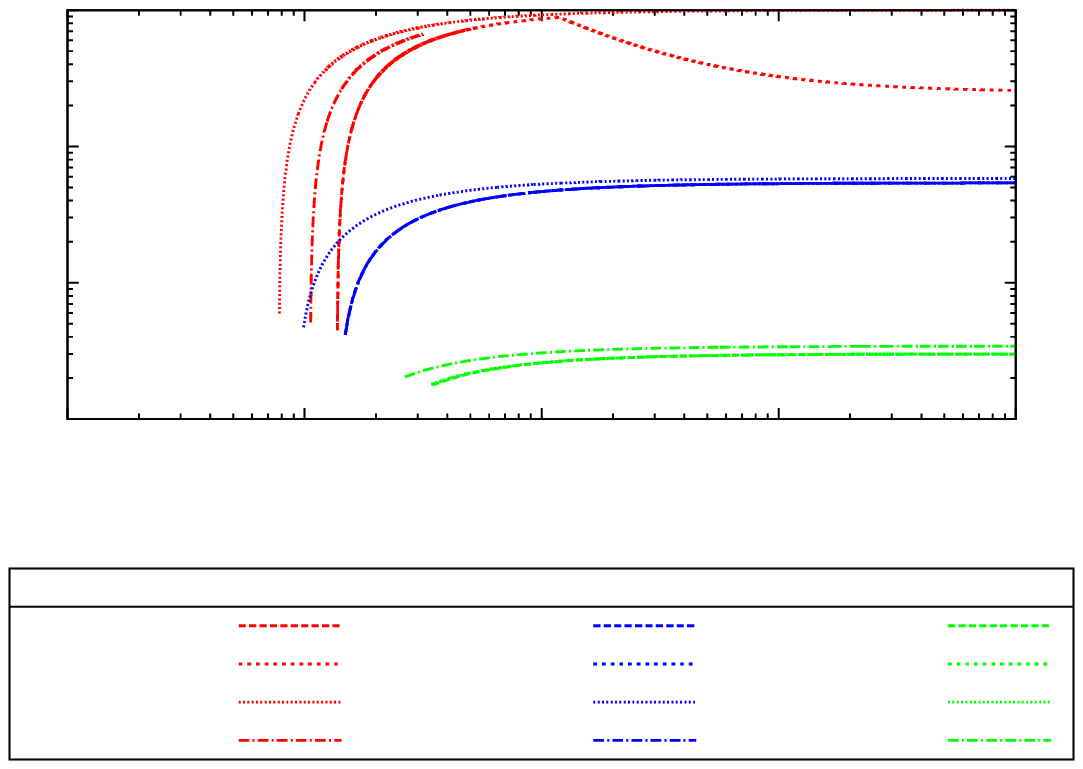}}
\caption{Competitive throughput vs. packets injected. $n=32$}
\label{fig:TvsI32}
\end{center}
\end{figure*}
\begin{figure*}[htb]
\begin{center}
\resizebox{\textwidth}{!}{\input{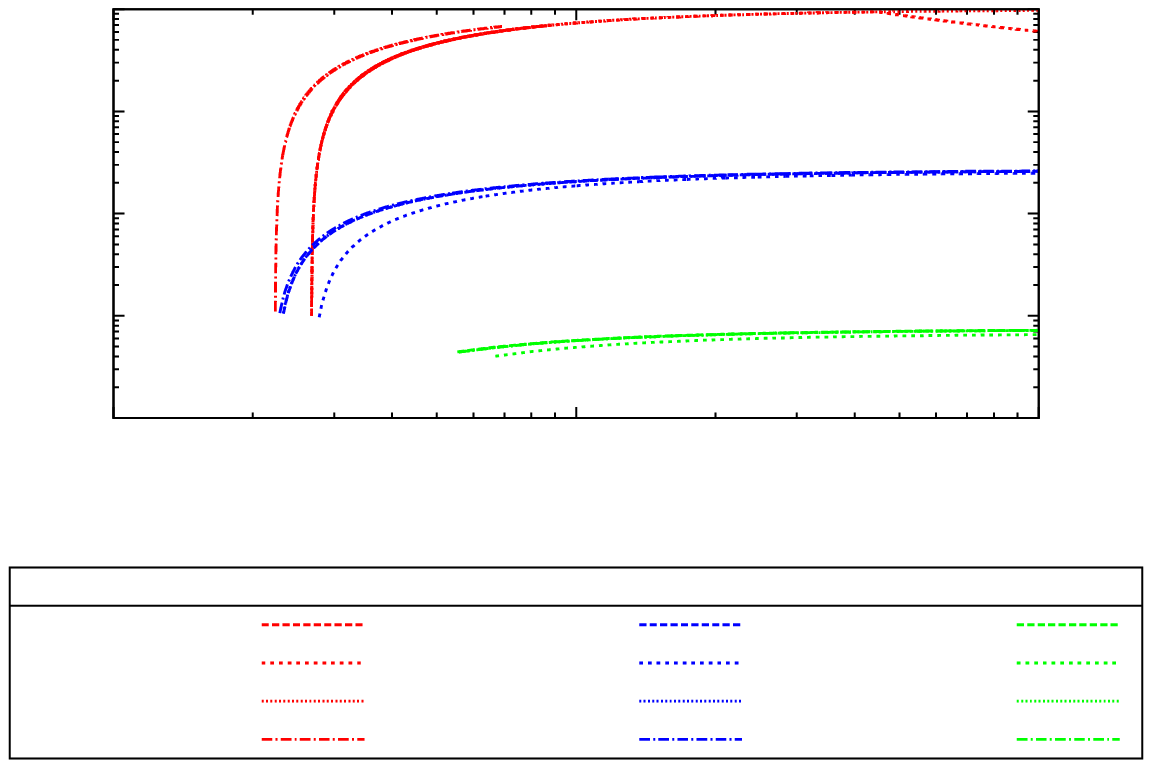}}
\caption{Competitive throughput vs. packets injected. $n=64$}
\label{fig:TvsI64}
\end{center}
\end{figure*}
\begin{figure*}[htb]
\begin{center}
\resizebox{\textwidth}{!}{\input{}}
\caption{Competitive throughput vs. $1+\delta$.}
\label{fig:Tvsdelta}
\end{center}
\end{figure*}

\bibliographystyle{abbrv}
\bibliography{references,references2}

\begin{thebibliography}{10}

\bibitem{DBLP:conf/wdag/2013}
Y.~Afek, editor.
\newblock {\em Distributed Computing - 27th International Symposium, DISC 2013,
  Jerusalem, Israel, October 14-18, 2013. Proceedings}, volume 8205 of {\em
  Lecture Notes in Computer Science}. Springer, 2013.

\bibitem{AlonBLP91}
N.~Alon, A.~Bar-Noy, N.~Linial, and D.~Peleg.
\newblock A lower bound for radio broadcast.
\newblock {\em J. Comput. Syst. Sci.}, 43(2):290--298, 1991.

\bibitem{AnantharamuCKR-INFOCOM10}
L.~Anantharamu, B.~S. Chlebus, D.~R. Kowalski, and M.~A. Rokicki.
\newblock Deterministic broadcast on multiple access channels.
\newblock In {\em Proceedings of the 29th IEEE International Conference on
  Computer Communications (INFOCOM)}, pages 1--5, 2010.

\bibitem{BII93}
R.~Bar-Yehuda, A.~Israeli, and A.~Itai.
\newblock Multiple communication in multihop radio networks.
\newblock {\em SIAM J. Comput.}, 22(4):875--887, 1993.

\bibitem{BenderFHKL05}
M.~A. Bender, M.~Farach-Colton, S.~He, B.~C. Kuszmaul, and C.~E. Leiserson.
\newblock Adversarial contention resolution for simple channels.
\newblock In {\em Proceedings of the 17th Annual ACM Symposium on Parallel
  Algorithms (SPAA)}, pages 325--332, 2005.

\bibitem{BienkowskiJKK12}
M.~Bienkowski, T.~Jurdzinski, M.~Korzeniowski, and D.~R. Kowalski.
\newblock Distributed online and stochastic queuing on a multiple access
  channel.
\newblock In M.~K. Aguilera, editor, {\em Distributed Computing}, volume 7611
  of {\em Lecture Notes in Computer Science}, pages 121--135. Springer Berlin
  Heidelberg, 2012.

\bibitem{ChlamtacK87}
I.~Chlamtac and S.~Kutten.
\newblock Tree-based broadcasting in multihop radio networks.
\newblock {\em IEEE Trans. Computers}, 36(10):1209--1223, 1987.

\bibitem{CW:waveexp}
I.~Chlamtac and O.~Weinstein.
\newblock The wave expansion approach to broadcasting in multihop networks.
\newblock In {\em Proc. of INFOCOM}, 1987.

\bibitem{Chlebus-01}
B.~S. Chlebus.
\newblock {\em Randomized communication in radio networks}, volume~I, pages 401
  -- 456.
\newblock Kluwer Academic Publishers, 2001.

\bibitem{ChlebusKPR11}
B.~S. Chlebus, D.~R. Kowalski, A.~Pelc, and M.~A. Rokicki.
\newblock Efficient distributed communication in ad-hoc radio networks.
\newblock In L.~Aceto, M.~Henzinger, and J.~Sgall, editors, {\em ICALP (2)},
  volume 6756 of {\em Lecture Notes in Computer Science}, pages 613--624.
  Springer, 2011.

\bibitem{ChlebusKR09}
B.~S. Chlebus, D.~R. Kowalski, and M.~A. Rokicki.
\newblock Maximum throughput of multiple access channels in adversarial
  environments.
\newblock {\em Distributed Computing}, 22(2):93--116, 2009.

\bibitem{ChlebusKR12}
B.~S. Chlebus, D.~R. Kowalski, and M.~A. Rokicki.
\newblock Adversarial queuing on the multiple access channel.
\newblock {\em ACM Transactions on Algorithms}, 8(1):5, 2012.

\bibitem{ClementiMS01}
A.~E.~F. Clementi, A.~Monti, and R.~Silvestri.
\newblock Selective families, superimposed codes, and broadcasting on unknown
  radio networks.
\newblock In S.~R. Kosaraju, editor, {\em SODA}, pages 709--718. ACM/SIAM,
  2001.

\bibitem{CzumajR03}
A.~Czumaj and W.~Rytter.
\newblock Broadcasting algorithms in radio networks with unknown topology.
\newblock In {\em FOCS}, pages 492--501. IEEE Computer Society, 2003.

\bibitem{DaumGKN13}
S.~Daum, S.~Gilbert, F.~Kuhn, and C.~C. Newport.
\newblock Broadcast in the ad hoc sinr model.
\newblock In Afek \cite{DBLP:conf/wdag/2013}, pages 358--372.

\bibitem{DeMarco08}
G.~{De~Marco}.
\newblock Distributed broadcast in unknown radio networks.
\newblock In S.-H. Teng, editor, {\em SODA}, pages 208--217. SIAM, 2008.

\bibitem{GPQ:gossipTree}
L.~Gasieniec, D.~Peleg, and Q.~Xin.
\newblock Faster communication in known topology radio networks.
\newblock {\em Distributed Computing}, 19(4):289--300, 2007.

\bibitem{GHK:throughputLB}
M.~Ghaffari, B.~Haeupler, and M.~Khabbazian.
\newblock A bound on the throughput of radio networks.
\newblock {\em CoRR}, abs/1302.0264, 2013.

\bibitem{GHK:randBroad}
M.~Ghaffari, B.~Haeupler, and M.~Khabbazian.
\newblock Randomized broadcast in radio networks with collision detection.
\newblock In {\em Proceedings of the 2013 ACM Symposium on Principles of
  Distributed Computing}, PODC '13, pages 325--334, New York, NY, USA, 2013.
  ACM.

\bibitem{HWaff}
M.~M. Halld{\'o}rsson and R.~Wattenhofer.
\newblock Wireless communication is in apx.
\newblock In {\em Proc. of the 36th International Colloquium on Automata,
  Languages and Programming, Part I}, pages 525--536, 2009.

\bibitem{JK-DISC-12}
T.~Jurdzinski and D.~R. Kowalski.
\newblock Distributed backbone structure for algorithms in the sinr model of
  wireless networks.
\newblock In M.~K. Aguilera, editor, {\em DISC}, volume 7611 of {\em Lecture
  Notes in Computer Science}, pages 106--120. Springer, 2012.

\bibitem{JurdzinskiKRS13}
T.~Jurdzinski, D.~R. Kowalski, M.~Rozanski, and G.~Stachowiak.
\newblock Distributed randomized broadcasting in wireless networks under the
  sinr model.
\newblock In Afek \cite{DBLP:conf/wdag/2013}, pages 373--387.

\bibitem{JurdzinskiKS13}
T.~Jurdzinski, D.~R. Kowalski, and G.~Stachowiak.
\newblock Distributed deterministic broadcasting in uniform-power ad hoc
  wireless networks.
\newblock In L.~Gasieniec and F.~Wolter, editors, {\em FCT}, volume 8070 of
  {\em Lecture Notes in Computer Science}, pages 195--209. Springer, 2013.

\bibitem{Kaff}
T.~Kesselheim.
\newblock Dynamic packet scheduling in wireless networks.
\newblock In {\em Proc. of the 31st Annual ACM SIGACT-SIGOPS Symposium on
  Principles of Distributed Computing}, pages 281--290, 2012.

\bibitem{KVaff}
T.~Kesselheim and B.~V{\"o}cking.
\newblock Distributed contention resolution in wireless networks.
\newblock In {\em Proc. of the 24th International Symposium on Distributed
  Computing}, volume 6343 of {\em Lecture Notes in Computer Science}, pages
  163--178. Springer-Verlag, Berlin, 2010.

\bibitem{KhabbazianK11}
M.~Khabbazian and D.~R. Kowalski.
\newblock Time-efficient randomized multiple-message broadcast in radio
  networks.
\newblock In C.~Gavoille and P.~Fraigniaud, editors, {\em PODC}, pages
  373--380. ACM, 2011.

\bibitem{Kowalski05}
D.~R. Kowalski.
\newblock On selection problem in radio networks.
\newblock In {\em Proceedings of the 24th ACM Symposium on Principles of
  Distributed Computing (PODC)}, pages 158--166, 2005.

\bibitem{KMR_fomc14}
D.~R. Kowalski, M.~A. Mosteiro, and T.~Rouse.
\newblock Dynamic multiple-message broadcast: bounding throughput in the
  affectance model.
\newblock In {\em 10th {ACM} International Workshop on Foundations of Mobile
  Computing, {FOMC} 2014, Philadelphia, PA, USA, August 11, 2014}, pages
  39--46, 2014.

\bibitem{KowalskiP05}
D.~R. Kowalski and A.~Pelc.
\newblock Broadcasting in undirected ad hoc radio networks.
\newblock {\em Distributed Computing}, 18(1):43--57, 2005.

\bibitem{KowalskiP07}
D.~R. Kowalski and A.~Pelc.
\newblock Optimal deterministic broadcasting in known topology radio networks.
\newblock {\em Distributed Computing}, 19(3):185--195, 2007.

\bibitem{KushilevitzM98}
E.~Kushilevitz and Y.~Mansour.
\newblock An omega({\it d} log ({\it n/d})) lower bound for broadcast in radio
  networks.
\newblock {\em SIAM J. Comput.}, 27(3):702--712, 1998.

\bibitem{SRSSINRdomSet}
C.~Scheideler, A.~W. Richa, and P.~Santi.
\newblock An o(log n) dominating set protocol for wireless ad-hoc networks
  under the physical interference model.
\newblock In {\em Proceedings of the 9th ACM International Symposium on Mobile
  Ad Hoc Networking and Computing}, pages 91--100. ACM, 2008.

\end{thebibliography}

\appendix

\section*{Appendix}

\section{Modeling SINR under the Affectance Model}
\label{s:mapping}

\begin{claim}
The affectance matrix 
\begin{align*}
A(w,(u,v)) &= 
\left\{ \begin{array}{ll}
 0 & \textrm{if $w=u$},\\
 \frac{P/d_{wv}^\alpha}{P/(\beta' d_{uv}^\alpha)-N} & \textrm{otherwise}.
  \end{array} \right.
\end{align*}
corresponds to the SINR model. 
\end{claim}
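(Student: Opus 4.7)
The plan is to show the claim by a direct algebraic equivalence between the SINR success condition and the affectance success condition $a_{\mathcal{T}(t)}((u,v)) < 1$, using the given matrix.

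First, I would write out the standard SINR success criterion for a transmission from $u$ to $v$ in the presence of the set of concurrent transmitters $\mathcal{T}(t)$: namely, the transmission is received iff
\[
\frac{P/d_{uv}^\alpha}{N + \sum_{w \in \mathcal{T}(t) \setminus \{u\}} P/d_{wv}^\alpha} > \beta',
\]
where $\beta'$ is, as stated in the excerpt, the threshold above which the receiver can decode. (The case $w=u$ is excluded from the sum since the transmitter's own signal is the desired one, which matches the $A(u,(u,v))=0$ convention in the matrix.)

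Next, I would rearrange this inequality to match the affectance threshold. Multiplying through by the denominator, the SINR condition becomes $P/(\beta' d_{uv}^\alpha) > N + \sum_{w \neq u} P/d_{wv}^\alpha$, hence $P/(\beta' d_{uv}^\alpha) - N > \sum_{w \neq u} P/d_{wv}^\alpha$. Dividing both sides by $P/(\beta' d_{uv}^\alpha) - N$ (which is positive precisely in the non-degenerate regime where even a lone transmission could succeed), this is equivalent to
\[
1 > \sum_{w \in \mathcal{T}(t) \setminus \{u\}} \frac{P/d_{wv}^\alpha}{P/(\beta' d_{uv}^\alpha) - N} = \sum_{w \in \mathcal{T}(t) \setminus \{u\}} A(w,(u,v)) = a_{\mathcal{T}(t)}((u,v)),
\]
where the first equality uses exactly the definition of $A$ given in the claim (with $A(u,(u,v))=0$ letting us include or exclude the $w=u$ term freely).

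Finally I would conclude: the SINR success condition holds iff $a_{\mathcal{T}(t)}((u,v))<1$, which by the definition of successful transmission in the affectance model given in Section~\ref{s:prelim} is exactly the success criterion under the claimed matrix. The additivity of the affectance over the transmitting set is automatic since it is expressed as a sum. The only subtlety, which I would flag but not dwell on, is the sign of the denominator $P/(\beta' d_{uv}^\alpha) - N$: when this is nonpositive, even a single isolated transmission cannot be decoded, and in that case $A(w,(u,v))$ should be interpreted as $+\infty$ (or the link $(u,v)$ simply does not exist in $E$), so the two models still agree on the (im)possibility of reception.
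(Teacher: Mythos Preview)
Your proof is correct and follows essentially the same approach as the paper: both establish the algebraic equivalence between the SINR success condition and the affectance condition $a_{\mathcal{T}(t)}((u,v))<1$, and both handle the degenerate case $P/(\beta' d_{uv}^\alpha)-N\le 0$ by noting that such a link is infeasible and hence not in $E$. The only cosmetic difference is that the paper argues the two directions (success and failure) separately, whereas you present a single biconditional chain.
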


\begin{proof}
To prove this claim, we show that there is a successful transmission in the SINR model if and only if there is a successful transmission in the affectance model with matrix $A$. 

Consider a successful transmission in the SINR model. We have 
\begin{align*}
\frac{P/d_{uv}^\alpha}{N+\sum_{w\neq u} P/d_{wv}^\alpha} &> \beta'\\
P/(\beta' d_{uv}^\alpha) &> N+\sum_{w\neq u} P/d_{wv}^\alpha\\
P/(\beta' d_{uv}^\alpha) - N &> \sum_{w\neq u} P/d_{wv}^\alpha.
\end{align*}
If $\sum_{w\neq u} P/d_{wv}^\alpha=0$ then $\sum_{w\neq u}A(w,(u,v))=0\Rightarrow$ success in affectance model. Otherwise, it is $\sum_{w\neq u} P/d_{wv}^\alpha>0$ and we have
\begin{align*}
\frac{P/(\beta' d_{uv}^\alpha) - N}{\sum_{w\neq u} P/d_{wv}^\alpha} &> 1.
\end{align*}
Thus, it is $P/(\beta' d_{uv}^\alpha) - N>0$ and, hence, we have
\begin{align*}
\frac{\sum_{w\neq u} P/d_{wv}^\alpha}{P/(\beta' d_{uv}^\alpha)-N} &< 1.
\end{align*}
Therefore, it is $\sum_{w\neq u} A(w,(u,v)) < 1 \Rightarrow$ success in affectance model.

Consider now a non-successful transmission in the SINR model. We have 
\begin{align*}
\frac{P/d_{uv}^\alpha}{N+\sum_{w\neq u} P/d_{wv}^\alpha} &\leq \beta'\\
P/(\beta' d_{uv}^\alpha) &\leq N+\sum_{w\neq u} P/d_{wv}^\alpha\\
P/(\beta' d_{uv}^\alpha) - N &\leq \sum_{w\neq u} P/d_{wv}^\alpha.
\end{align*}
If $P/(\beta' d_{uv}^\alpha) \leq N$, it would mean that $P$ is not large enough to overcome the noise, even if no other node transmits. Then, rather than being produced by interference, the failure would be due to consider a link that is not even feasible. That is, $(u,v)\notin E$. Thus, it must be $P/(\beta' d_{uv}^\alpha) > N$ and we have
\begin{align*}
\frac{\sum_{w\neq u} P/d_{wv}^\alpha}{P/(\beta' d_{uv}^\alpha)-N} &\geq 1\\
\end{align*}
Therefore, it is
$\sum_{w\neq u}A(w,(u,v)) \geq 1 \Rightarrow$ failure in affectance model.
\end{proof}


\section{Notes}
\label{sec:diff}

In this section, we highlight the differences between this paper and our preliminary work appeared in~\cite{KMR_fomc14}.

In~\cite{KMR_fomc14}, we studied a model of affectance that subsumes only \emph{some} SINR models, by combining the effect of \RN collisions with affectance from nodes at more than one hop. Here, we generalize our model to subsume any arbitrary interference model. For instance, in the present model it is possible to receive a transmission even when more than one neighboring node transmits, as in some SINR models.

Also, in~\cite{KMR_fomc14} our maximum path affectance metric was based on fast links only, which yields possibly tighter bounds. However, the definition was based on a specific BFS tree (a GBST~\cite{GPQ:gossipTree}) which related the network characterization to our specific algorithmic solution. In the present work the characterization is related only to topology, since it is based on arbitrary BFS trees.

We also notice here that the proof of the maximum rank in~\cite{KMR_fomc14} has an error, introduced while bounding the maximum number of ranks needed for updating the rank according to affectance. Lemma~\ref{lemma:maxrank} here provides the correct bound.

\end{document}